\newcommand{\Continue}{\textbf{continue} }
\newcommand{\cmark}{\ding{51}}
\newcommand{\xmark}{\ding{55}}
\DeclareSIUnit\loc{loc}
\definecolor{Green}{rgb}{0.13, 0.80, 0.13}
\newcommand{\blue}[1] {
	\protect\leavevmode
  \begingroup
	\color{blue}%
    #1%
	\endgroup
}
\newcommand{\red}[1] {
	\protect\leavevmode
  \begingroup
	\color{red}%
    #1%
	\endgroup
}
\newcommand{\green}[1] {
	\protect\leavevmode
  \begingroup
	\color{Green}%
    #1%
	\endgroup
}
\newcommand*{\trail}{\tau}
\newcommand*{\q}{\omega}
\newcommand*{\p}{\pi}
\newcommand*{\wl}{\mathrm{WL}} 
\newcommand*{\reason}{\rho} 
\newcommand*{\level}{\delta}
\newcommand*{\lazy}{\lambda} 
\newcommand*{\onLevel}[1]{@#1} 
\newcommand*{\napsat}[0]{\textsc{NapSAT}\xspace} 
\newcommand*{\cadical}[0]{\textsc{CaDiCaL}\xspace} %
\newcommand*{\glucose}[0]{\textsc{Glucose}\xspace} %
\newtheorem{invariant}[theorem]{Invariant}
\def\orcidID#1{\href{http://orcid.org/#1}{\raisebox{-1.25pt}{\includegraphics{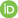}}}}
\renewcommand{\paragraph}[1]{\par\smallskip\noindent\textbf{#1}}
\newcommand{\topclause}{\blacksquare}
\renewcommand{\square}{\topclause}
\title{Lazy Reimplication in Chronological~Backtracking}
\titlerunning{Lazy Reimplication in Chronological Backtracking}
\author{%
    Robin Coutelier}
    {TU Wien, Vienna, Austria}
    {robin.coutelier@tuwien.ac.at}
    {0009-0002-4735-5215}
    {}
\author{%
    Mathias Fleury}
    {University Freiburg, Freiburg, Germany}
    {fleury@cs.uni-freiburg.de}
    {0000-0002-1705-3083}
    {}
\author{%
    Laura Kovács}
    {TU Wien, Vienna, Austria}
    {laura.kovacs@tuwien.ac.at}
    {0000-0002-8299-2714}
    {}
\authorrunning{Coutelier, Fleury, and Kovács}
\date{}
\keywords{Chronological Backtracking, CDCL, Invariants, Watcher Lists}
\begin{document}

\maketitle

\begin{abstract}
Chronological backtracking is an interesting SAT solving technique within CDCL reasoning, as it backtracks less aggressively upon conflicts. However, chronological backtracking is more difficult to maintain due to its weaker SAT solving invariants. This paper introduces a lazy reimplication procedure for missed lower implications in chronological backtracking. Our method saves propagations by reimplying literals on demand, rather than eagerly. Due to its modularity, our work can be replicated in other solvers, as shown by our results in the solvers \cadical and \glucose.
\end{abstract}

\section{Introduction}
\label{sec:intro}
In the past few years, chronological backtracking in CDCL-based SAT solving attracted renewed interest
as it implements less aggressive procedures when backtracking upon conflicts, particularly for
undoing literal assignments stored in the assignment stack. Chronological backtracking has been proven sound and complete, while also empirically improving performance on SAT competition problems~\cite{DBLP:conf/sat/NadelR18,DBLP:conf/sat/MohleB19,DBLP:conf/sat/Nadel22}.

Without chronological backtracking in SAT solving, the truth value of each literal is set as early as possible in the solving process. With chronological backtracking,
there are, however, missed lower implications (MLI), i.e., clauses that could have set a literal at a lower SAT decision level.
As a remedy to MLI, IntelSAT~\cite{DBLP:conf/sat/Nadel22} and \cadical-1.9.4~\cite{cadical-reimply} fix
the level of the assignments. Modifying levels impacts solving performance and significantly clutters the code; for example, reimplication techniques for detecting MLI have been removed in \cadical-1.9.5~\cite{cadical}
due to the increased code complexity.

In this paper, we introduce a \emph{lazy reimplication procedure for resolving missed lower implications in chronological backtracking, while also ensuring efficiency in SAT solving.} Doing so, in \Cref{fig:invariant-list}, we state the invariant properties to be maintained during CDCL and highlight differences between relevant backtracking approaches in SAT solving. In particular, we consider and adjust variants of \emph{non-chronological backtracking} (NCB)~\cite{DBLP:journals/tc/Marques-SilvaS99} and
strong chronological backtracking (SCB)~\cite{DBLP:conf/sat/Nadel22}.
A formal presentation of these invariants and backtracking variants is given in \Cref{sec:invariant-framework}. Using the invariants of \Cref{fig:invariant-list}, in \Cref{sec:practical-approach}
we introduce a lazy reimplication procedure to handle missed lower implications, with a particular focus on handling unit implications after backtracking. We also
adjust and enhance the
first unique implication point (UIP) algorithm~\cite{DBLP:conf/dac/MoskewiczMZZM01} with the knowledge of missed lower implications. Our approach is sound (\Cref{sec:soundness}). We implemented our work in the new solver \napsat\cite{napsat} and present our empirical findings in \Cref{sec:empirical-results}. To demonstrate the flexibility of our lazy reimplication techniques, we also implemented the algorithms of \Cref{sec:practical-approach} in
\cadical~\cite{BiereFleuryPolitt-SAT-Competition-2023-solvers} and \glucose~\cite{glucose}, and provide empirical comparisons using these solvers.

\begin{figure}[t]
    \centering
    \renewcommand{\arraystretch}{2.5}
    \begin{mdframed}
    \begin{tabular}{wl{5.2cm} p{7cm}}
        \multicolumn{2}{l}{Invariant properties for CDCL algorithms (\Cref{sec:invariant-framework})}\\
        \hdashline
        \Cref{inv:weak-watched-literals}--\emph{Weak watched literals}:
        & \parbox{0.55\textwidth}{No conflict is missed.} \\
        \midrule
        \multicolumn{2}{l}{Invariants on implications, native for NCB and WCB (\Cref{sec:convenient-cdcl-invariants})}\\
        \hdashline
        \Cref{inv:implied-literals}--\emph{Implied literals}:
        & Literals are decisions or implied by a clause $C$ that is made unit by the partial assignment.\\
        \Cref{inv:topological-order}--\emph{Topological order}:
        & The partial SAT assignment follows a topological order of the implication graph. \tabularnewline[0.5em]
        \midrule
        \multicolumn{2}{l}{Strong invariant, non-trivial for CDCL with CB and native in NCB (\Cref{sec:convenient-cdcl-invariants})}\\
        \hdashline
        \rule{0pt}{6ex}
        \Cref{inv:strong-watched-literals}--\emph{Strong watched literals}:
        & No implication nor conflict can be missed. \\
    \end{tabular}
    \end{mdframed}
    \caption{Invariant properties for CDCL-based SAT solving and maintained by the different chronological backtracking (CB) strategies, particularly by
    non-chronological backtracking (NCB) and weak chronological backtracking (WCB).
    \label{fig:invariant-list}}
\end{figure}

\paragraph{Related work.}
Within CDCL, the truth values of literals are assigned by guessing (deciding) and propagating them in a trail until a conflict is found. Upon conflict analysis, the trail is adapted by backtracking, i.e. revoking some assignments and swapping the truth value of one variable, called the unique implication point (UIP). The standard approach~\cite{DBLP:journals/tc/Marques-SilvaS99} is to fix the conflict as early as possible with \emph{non-chronological backtracking} (NCB) and all assignments between the current point and the point where the UIP is set are deleted.

A different backtracking approach comes with \emph{chronological backtracking} (CB)~\cite{DBLP:conf/sat/NadelR18,DBLP:conf/sat/MohleB19}. Here, a less
aggressive backtracking scheme is used and some propagations and decisions are kept. Chronological backtracking may backjump at any level between the UIP and the
UIP falsification point minus one. As a result, chronological backtracking resets a smaller part of the trail, but it may miss propagations
that could have been done earlier if the learned clause was known beforehand. In this paper, we refer by \emph{weak chronological backtracking} (WCB) to the CDCL algorithms that use chronological backtracking mechanism and which do not detect every propagation as early as possible (see \Cref{sec:invariants-in-cb})

For recovering such missed propagations, we define \emph{strong chronological backtracking} (SCB). In particular, Nadel~\cite{DBLP:conf/sat/Nadel22} introduced a reimplication procedure that eagerly re-assigns literals detected as missed lower implications to their lowest possible level.
We refer to this SCB technique as \emph{eager strong chronological backtracking} (ESCB).
Our work introduces a new SCB method, \emph{lazy strong chronological backtracking} (LSCB). Unlike ESCB, within LSCB we reimply missed implications on demand. As such, our work is stronger than WCB, as WCB does not perform reimplications at all. In addition, our technique is shown to be easier and more flexible to implement than ESCB or WCB (\Cref{sec:empirical-results}).

\paragraph{Our contributions.} This paper brings the following contributions to chronological backtracking in CDCL-based proof search.
\begin{enumerate}
    \item We formalize invariant properties that need to be maintained during SAT solving with chronological backtracking (\Cref{sec:invariant-framework}). Our invariants incorporate and reason over different backtracking strategies.
    \item We introduce
    \emph{lazy strong chronological backtracking} (LSCB) for on-demand reimplication of missed lower implications (Sec.~\ref{sec:practical-approach}) and prove soundness of our approach (\Cref{sec:soundness}).
    \item We implement our work in the new \napsat\cite{napsat,modularit} solver (\Cref{sec:empirical-results}).
    To showcase the flexibility and efficiency of our approach, we integrate LSCB into \cadical~\cite{BiereFleuryPolitt-SAT-Competition-2023-solvers} and \glucose~\cite{glucose}, and provide experimental comparisons using these solvers.
\end{enumerate}

\section{Preliminaries}
\label{sec:preliminaries}

We assume familiarity with propositional logic and CDCL~\cite{DBLP:series/faia/336}, and use the standard logical connectives $\neg$, $\land$, and $\lor$.
A finite set of elements (e.g.\, literals) is called \emph{conjunctive} (respectively, \emph{disjunctive}) to indicate that the set is the conjunction (respectively, disjunction) of its elements.
An \emph{ordered set} is a set $\mathcal{S}$ which defines a bijective function $p_{\mathcal{S}}$ from elements of $\mathcal{S}$ to naturals, such that $p_{\mathcal{S}}(e)$ is the position of the element $e$ in the ordered set $\mathcal{S}$. We consider the first element of $\mathcal{S}$ to have the position $0$. Ordered sets are stable under the removal of elements; that is, for the ordered sets $\mathcal{S}, \mathcal{T}, \mathcal{U}$ with $\mathcal{S} = \mathcal{T} \setminus \mathcal{U}$, we have $\forall e, e' \in \mathcal{S}.\ p_\mathcal{S}(e) < p_\mathcal{S}(e') \Leftrightarrow p_\mathcal{T}(e) < p_\mathcal{T}(e')$. We denote by $\cdot$ set concatenation; for simplicity, we use $\cdot$ to also denote appending a sequence with an element.
We write $\mathcal{S}[a:b]$ to select the ordered elements $e$ in $\mathcal{S}$ with positions $a \leq p_\mathcal{S}(e) \leq b$.

We denote by $\mathcal{V}$ a countable set of Boolean variables $v$.
We consider propositional formulas $F$ in conjunctive normal form (CNF), represented by a conjunctive set of clauses $\{C_1, C_2, \ldots, C_n\}$ over $\mathcal{V}$.
Clauses are disjunctive sets of literals $C = \{c_1, c_2, \ldots, c_m\}$, where a literal $c_i$ is either a Boolean variable $v$ or
a negation $\neg v$ of a variable $v$.

To efficiently identify unit propagations, SAT solvers track two literals per clause in the two-watched literal scheme~\cite{DBLP:conf/dac/MoskewiczMZZM01}. We denote the watched literals of a clause $C$ by
$c_1$ and $c_2$,
and write $\wl(c_1)$ and $\wl(c_2)$ for the watched lists of $c_1$ and $c_2$. We have $C \in \wl(c_1)\cap\wl(c_2)$.

During SAT solving, solvers keep track of a \emph{partial assignment}, also called \emph{trail} and denoted as the conjunctive ordered set $\p = \trail \cdot \q$, which is split into two parts:
(i) $\trail$ is the set of literals that were already propagated and do not need to be inspected anymore (by checking the watcher lists);
(ii) $\q$ is the \emph{propagation queue} containing literals that were implied and waiting to be propagated.
The partial assignment $\p$ contains the set $\p^d \subseteq \p$ of decision literals. Decisions literals in $\p^d$  are arbitrarily chosen literals when unit propagation cannot be further used and the truth value of a (decision) literal needs to be picked and assigned.
We call \emph{unit}, a clause $C$ containing exactly one unassigned literal $\ell$ and whose other literals are falsified, i.e., $\exists \ell \in C.\ C \setminus \{\ell\}, \p \models \bot \land \{\ell, \neg \ell\} \cap \p = \emptyset$.
For conflict analysis, the propagation reasons of literals are analyzed. Therefore, SAT solvers use a $\reason$ function that maps literals to clauses such that $\reason(\ell)$ captures the reason for propagating $\ell$.
The reason for propagating $\ell$ is the clause $C$ that implied $\ell$ under assumption $\p$, that is, $[\ell \in \p] \land [\ell \in \reason(\ell)] \land [\reason(\ell) \setminus \{\ell\}, \p \models \bot]$.
Following~\cite{DBLP:conf/sat/MohleB19}, we use $\level$ to represent the (decision) \emph{level of $\ell$}, i.e., the level when a truth assignment to $\ell$ was made. Formally, if $\ell$ is a decision literal, then the level $\level(\ell) = \level(\neg \ell)$ of $\ell$ is
the number of decisions preceding and including $\ell$, that is $\level(\ell) = |\p[0:p_\p(\ell)] \cap \p^d|$. Further, for literals $\ell$ implied by $\reason(\ell)$, we have
$\level(\ell) = \max_{\ell' \in \reason(\ell)\setminus\{\ell\}} \level(\ell')$. Finally, $\level(\ell) = \infty$ for unassigned literals $\ell$. The definition of $\level$ is extended to clauses and trails, with $\level(C)=\max_{\ell \in C}\level(\ell)$; similarly for $\level(\p)$. The level of the empty set is $\level(\emptyset) = 0$. We write $\delta[\ell \gets d]$ to denote that the level of $\ell$ is updated to $d$.
We reserve the special symbol $\topclause$ to denote \emph{undefined clauses} during SAT solving, with $\level(\square) = \infty$.

In standard CDCL with
non-chronological backtracking (NCB)~\cite{DBLP:journals/tc/Marques-SilvaS99}, level \(\level\) stores the number of decisions that appear before in the trail, and is always the lowest level possible.
In CDCL with chronological backtracking, the history of propagations and conflicts may, however, lead to \emph{missed lower implications} (MLI), where a MLI captures the fact that a clause $C$ is satisfied by a unique literal $\ell$ at a level \emph{strictly higher} than $\level(C \setminus \{\ell\})$. Therefore, in an MLI, the literal \(\ell\)
could have been propagated at a lower level in the trail.

\begin{example}[Missed Lower Implications -- MLI]\label{ex:MLI}
\Cref{fig:mli-ex} shows a clause set $\{C_1,\ldots,C_7\}$ and a trail $\p = \trail \cdot \q$ during CDCL solving with chronological backtracking. The trail diagram displays, from left to right, the order in which literals are decided and propagated, as well as the location of the propagation head (symbolized by the dashed line). The propagation level is symbolized by the height of the step. As a visual aid, literals are colored \green{green}, \red{red} or black, symbolizing respectively satisfied, falsified, and unassigned literals. The watched literals are the first two in the clause.

In the example, the clause set $F_0 = \{C_1, \dots, C_6\}$ was given as input. The decisions $v_1, v_2$ and $v_3$ were made, then the solver found a conflict in $C_2$ after implying $v_4$ with reason $\reason(v_4) = C_1$. The clause $C_7 = \red{\neg v_{3}}\onLevel{3} \lor \red{\neg v_{1}}\onLevel{1}$ is learned and the solver backtracks to level $\level(C_7) - 1 = 2$, continuing its propagations until it reaches the assignment shown on \Cref{fig:mli-ex}.
\Cref{fig:mli-ex} shows that $C_4$ is a MLI. Indeed, $v_2$ is satisfied at level~2, while all other literals are falsified at level~1. After backtracking to level 1, the implication of $v_2$ by $C_4$ is missed since (i) $\neg v_3$ was already propagated, and (ii) $C_4$ is watched by $v_3$ and $v_2$.
\end{example}

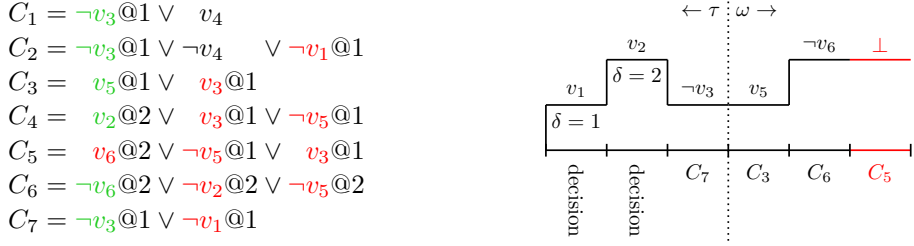
\begin{figure}[t]
  \centering
  \begin{subfigure}{0.35\textwidth}
  \centering
  \raisebox{\height}{
    \begin{tabular}{l}
        $C_{1} = \green{\neg v_{3}}\onLevel{1} \lor \phantom{\neg} v_{4}\phantom{\onLevel{1}}$\\
        $C_{2} = \green{\neg v_{3}}\onLevel{1}\lor \neg v_{4}\phantom{\onLevel{1}} \lor \red{\neg v_{1}}\onLevel{1}$\\
        $C_{3} = \green{\phantom{\neg} v_{5}}\onLevel{1} \lor \red{\phantom{\neg} v_{3}}\onLevel{1}$\\
        $C_{4} = \green{\phantom{\neg} v_{2}}\onLevel{2}\lor \red{\phantom{\neg} v_{3}}\onLevel{1}\lor \red{\neg v_{5}}\onLevel{1}$\\
        $C_{5} = \red{\phantom{\neg} v_{6}}\onLevel{2} \lor \red{\neg v_{5}}\onLevel{1} \lor \red{\phantom{\neg} v_{3}}\onLevel{1}$\\
        $C_{6} = \green{\neg v_{6}}\onLevel{2} \lor \red{\neg v_{2}}\onLevel{2} \lor \red{\neg v_{5}}\onLevel{2}$\\
        $C_{7} = \green{\neg v_{3}}\onLevel{1} \lor \red{\neg v_{1}}\onLevel{1}$\\
    \end{tabular}
  }
  \end{subfigure}
    \hfill 
    \begin{subfigure}{0.6\textwidth}
        \centering
        \tikz \node [scale=0.8] {
            \begin{tikzpicture}
                \draw[thick] (0, 0) -- node[below, yshift = -0.1cm] {\rotatebox{270}{decision}} (1, 0);
\draw[thick] (0, 0.000000) -- (0, 0.750000);
\draw[thick] (0, 0.750000) -- node[below] {$\level = 1$} node[above] {$v_{1}$} (1, 0.750000);
\draw[thick] (1, 0) -- node[below, yshift = -0.1cm] {\rotatebox{270}{decision}} (2, 0);
\draw[thick] (1, 0.750000) -- (1, 1.500000);
\draw[thick] (1, 1.500000) -- node[below] {$\level = 2$} node[above] {$v_{2}$} (2, 1.500000);
\draw[thick] (2, 0) -- node[below, yshift = -0.1cm] {$C_7$} (3, 0);
\draw[thick] (2, 1.500000) -- (2, 0.750000);
\draw[thick] (2, 0.750000) -- node[above] {$\neg v_{3}$} (3, 0.750000);
\draw[thick] (3, 0) -- node[below, yshift = -0.1cm] {$C_3$} (4, 0);
\draw[thick] (3, 0.750000) -- node[above] {$v_{5}$} (4, 0.750000);
\draw[thick] (4, 0) -- node[below, yshift = -0.1cm] {$C_6$} (5, 0);
\draw[thick] (4, 0.750000) -- (4, 1.500000);
\draw[thick] (4, 1.500000) -- node[above] {$\neg v_{6}$} (5, 1.500000);
\draw[thick, red] (5, 0) -- node[below, yshift = -0.1cm] {$\red{C_5}$} (6, 0);
\draw[thick, red] (5, 1.500000) -- node[above, red] {$\bot$} (6, 1.500000);

\draw[thick, dotted] (3, -1.500000) -- (3, 2.50000) node[right, yshift=-0.2cm] {$\q \rightarrow$} node[left, yshift=-0.2cm] {$\leftarrow \trail$};

\foreach \x in {0,1,...,6}
  \draw[thick] (\x,3pt)--(\x,-3pt);
            \end{tikzpicture}
        };
    \end{subfigure}
    \caption{\(C_4\) is a MLI since it is satisfied only $v_2$ at level 2, and $v_3, \neg v_5$ are falsified at level 1. We use the notation \(v\onLevel{1}\) to indicate that literal \(v\) is on level~\(1\).
    }
    \label{fig:mli-ex}
\end{figure}

\section{Invariant Properties on CDCL Variants}
\label{sec:invariant-framework}
To properly handle MLI similar to \Cref{ex:MLI}, in this section
we revisit and formalize our invariants from \Cref{fig:invariant-list}, expressing properties that need to be maintained in (variants of) CDCL with chronological backtracking.

The crux of our invariant properties is captured by watched literals~\cite{DBLP:conf/dac/MoskewiczMZZM01}.
They reduce the number of clauses to be checked when propagating a literal.
\Cref{inv:weak-watched-literals} therefore expresses that, as long as CDCL does not falsify one of the watched literals $c_1,c_2$ of a clause $C$, the clause $C$ is not a conflict. Therefore, when propagating a watched literal $c_i$ during CDCL, only checking the clauses watched by $\neg c_i$ is sufficient to not miss any conflict.

\setcounter{invariant}{0}
\begin{invariant}[Weak watched literals]
    \label{inv:weak-watched-literals}
    Let \(\p = \trail \cdot \q\) be the current trail. For each clause $C \in F$ watched by the two distinct watched literals $c_1, c_2$, we have
    \(
        \neg c_1 \in \trail \Rightarrow \neg c_2 \notin \trail.
    \)
\end{invariant}

\Cref{inv:weak-watched-literals} ensures that conflicts are not missed during CDCL. Indeed, if there is a conflicting clause $C$, the conflict is found after propagating all literals of $C$. After propagation, no more literal has to be propagated, so \(\p = \trail\). A conflicting clause $C$ thus violates \Cref{inv:weak-watched-literals}, and hence the conflict of $C$ is captured during CDCL.

\subsection{CDCL Invariants on Implications}
\label{sec:convenient-cdcl-invariants}
Next, we ensure the soundness of unit implications. \Cref{inv:implied-literals} expresses that literals are either decisions or implied by a sound implication. Note that an implication can be performed if there is only one unassigned literal that can satisfy a clause $C$; hence, $C$ is a unit clause. In addition to ensuring that the solver infers correct literals, \Cref{inv:implied-literals} is also relevant for conflict analysis (see proof of \Cref{thm:learned-clause-sound}).
\begin{invariant}[Implied literals]
  \label{inv:implied-literals}
  If a literal $\ell$ is in the trail $\p$, then $\ell$ is either a decision literal or $\ell$ is implied by $\p$ and its reason $\reason(\ell)$. That is,
  \begin{equation*}
    \forall \ell \in \p.\ \ell \in \p^d ~\lor~ \left[\ell \in \reason(\ell) ~\land~ [\reason(\ell) \setminus \{\ell\}, \p] \vDash \bot\right].
  \end{equation*}
\end{invariant}

To perform conflict analysis with the first unique implication point (UIP) \cite{DBLP:conf/dac/MoskewiczMZZM01}, CDCL solving assumes that literals are organized in a topological sort of the implication graph.
\begin{invariant}[Topological order]
  \label{inv:topological-order}
  Trail $\p$ is a topological order of the implication graph:
  \begin{equation*}
    \forall \ell \in \p.\ \forall \ell' \in \reason(\ell) \setminus \{\ell\}.\ p_\p(\neg \ell') \leq p_\p(\ell),
  \end{equation*}
  where $p_\p(\ell)$ and $p_\p(\neg \ell')$ are respectively the positions of $\ell$ and $\neg \ell'$ in $\p$.
\end{invariant}

\noindent \Cref{inv:topological-order} holds by construction in CDCL with non-chronological backtracking (NCB) and chronological backtracking without reimplication. However, \Cref{inv:topological-order} is crucial in any setting of reimplying literals.

Finally, we impose that Boolean constraint propagation (BCP) in CDCL does not miss unit implication during proof search. \Cref{inv:strong-watched-literals} therefore formalizes that CDCL cannot have one propagated falsified watched literal without the clause being satisfied.
\begin{invariant}[Strong watched literals]
    \label{inv:strong-watched-literals}
    Consider the trail \(\p = \trail \cdot \q\). For each clause $C \in F$ watched by the two distinct watched literals $c_1, c_2$, we have
    \(
        \neg c_1 \in \trail \Rightarrow c_2 \in \p.
    \)
\end{invariant}

\Cref{inv:strong-watched-literals} strengthens \Cref{inv:weak-watched-literals}. When a conflicting clause $C$ is detected while propagating $\ell$, the literal $\ell$ cannot be added to $\trail$ without violating \Cref{inv:strong-watched-literals}. As such, by imposing \Cref{inv:strong-watched-literals}, the conflict of $C$ is resolved and the trail is adapted.

\subsection{Chronological Backtracking}
\label{sec:invariants-in-cb}
\Cref{inv:strong-watched-literals} holds for CDCL with NCB, since the trail contains monotonically increasing decision levels. Therefore, within NCB, literals are unassigned in the reverse order of propagation. In particular, if a literal $\ell$ is satisfied in a clause $C$ when propagating another literal $\ell'$, the literal $\ell$ remains satisfied at least until $\ell'$ is backtracked.

\paragraph{Weak chronological backtracking (WCB).}
When considering (variants of) chronological backtracking in CDCL, \Cref{inv:strong-watched-literals} becomes critical, as detailed next. The core idea is to save parts of the trail without repropagating unlike~\cite{DBLP:conf/sat/HickeyB20}.

\begin{example}\label{ex:MLI:CB}
Let us revisit the example of \Cref{fig:mli-ex}.
\Cref{fig:mli-ex-wcb} shows the trail after backtracking to level 1. Literal $v_3$ is already propagated ($v_3 \in \trail$), and $C_4$ is still watched by $v_2$ and $v_3$. Therefore, the implication of $v_2$ is missed, even though \Cref{inv:weak-watched-literals} is not violated.
\end{example}

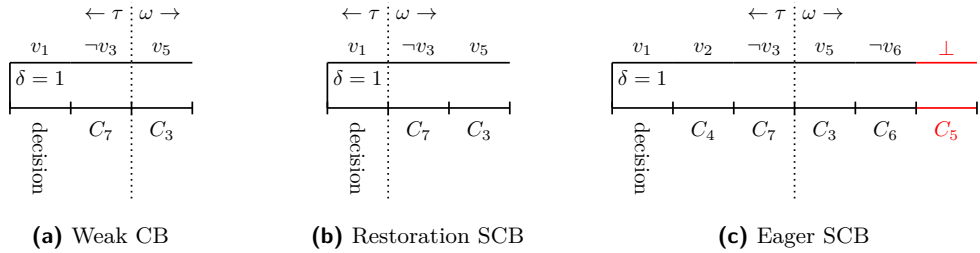
\begin{figure}[t]
  \centering
    \begin{subfigure}{0.29\textwidth}
        \centering
            \tikz \node [scale=0.8] {
                \begin{tikzpicture}
                    \draw[thick] (0, 0) -- node[below, yshift = -0.1cm] {\rotatebox{270}{decision}} (1, 0);
\draw[thick] (0, 0.000000) -- (0, 0.750000);
\draw[thick] (0, 0.750000) -- node[below] {$\delta = 1$} node[above] {$v_{1}$} (1, 0.750000);
\draw[thick] (1, 0) -- node[below, yshift = -0.1cm] {$C_7$} (2, 0);
\draw[thick] (1, 0.750000) -- node[above] {$\neg v_{3}$} (2, 0.750000);
\draw[thick] (2, 0) -- node[below, yshift = -0.1cm] {$C_3$} (3, 0);
\draw[thick] (2, 0.750000) -- node[above] {$v_{5}$} (3, 0.750000);

\draw[thick, dotted] (2, -1.500000) -- (2, 1.75000) node[right, yshift=-0.2cm] {$\q \rightarrow$} node[left, yshift=-0.2cm] {$\leftarrow \trail$};

\foreach \x in {0,1,...,3}
  \draw[thick] (\x,3pt)--(\x,-3pt);
                \end{tikzpicture}
            };
            \captionsetup{justification=centering}
            \caption{Weak CB}
            \label{fig:mli-ex-wcb}
    \end{subfigure}
    \begin{subfigure}{0.29\textwidth}
        \centering
            \tikz \node [scale=0.8] {
                \begin{tikzpicture}
                    \draw[thick] (0, 0) -- node[below, yshift = -0.1cm] {\rotatebox{270}{decision}} (1, 0);
\draw[thick] (0, 0.000000) -- (0, 0.750000);
\draw[thick] (0, 0.750000) -- node[below] {$\delta = 1$} node[above] {$v_{1}$} (1, 0.750000);
\draw[thick] (1, 0) -- node[below, yshift = -0.1cm] {$C_7$} (2, 0);
\draw[thick] (1, 0.750000) -- node[above] {$\neg v_{3}$} (2, 0.750000);
\draw[thick] (2, 0) -- node[below, yshift = -0.1cm] {$C_3$} (3, 0);
\draw[thick] (2, 0.750000) -- node[above] {$v_{5}$} (3, 0.750000);

\draw[thick, dotted] (1, -1.500000) -- (1, 1.75000) node[right, yshift=-0.2cm] {$\q \rightarrow$} node[left, yshift=-0.2cm] {$\leftarrow \trail$};

\foreach \x in {0,1,...,3}
  \draw[thick] (\x,3pt)--(\x,-3pt);
                \end{tikzpicture}
            };
            \captionsetup{justification=centering}
            \caption{Restoration SCB}
            \label{fig:mli-ex-rscb}
    \end{subfigure}
    \begin{subfigure}{0.4\textwidth}
        \centering
            \tikz \node [scale=0.8] {
                \begin{tikzpicture}
                    \draw[thick] (0, 0) -- node[below, yshift = -0.1cm] {\rotatebox{270}{decision}} (1, 0);
\draw[thick] (0, 0.000000) -- (0, 0.750000);
\draw[thick] (0, 0.750000) -- node[below] {$\level = 1$} node[above] {$v_{1}$} (1, 0.750000);
\draw[thick] (1, 0) -- node[below, yshift = -0.1cm] {$C_4$} (2, 0);
\draw[thick] (1, 0.75) -- node[above] {$v_{2}$} (2, 0.75);
\draw[thick] (2, 0) -- node[below, yshift = -0.1cm] {$C_7$} (3, 0);
\draw[thick] (2, 0.750000) -- node[above] {$\neg v_{3}$} (3, 0.750000);
\draw[thick] (3, 0) -- node[below, yshift = -0.1cm] {$C_3$} (4, 0);
\draw[thick] (3, 0.750000) -- node[above] {$v_{5}$} (4, 0.750000);
\draw[thick] (4, 0) -- node[below, yshift = -0.1cm] {$C_6$} (5, 0);
\draw[thick] (4, 0.75) -- node[above] {$\neg v_{6}$} (5, 0.75);
\draw[thick, red] (5, 0) -- node[below, yshift = -0.1cm] {$\red{C_5}$} (6, 0);
\draw[thick, red] (5, 0.75) -- node[above, red] {$\bot$} (6, 0.75);

\draw[thick, dotted] (3, -1.500000) -- (3, 1.750000) node[right, yshift=-0.2cm] {$\q \rightarrow$} node[left, yshift=-0.2cm] {$\leftarrow \trail$};

\foreach \x in {0,1,...,6}
  \draw[thick] (\x,3pt)--(\x,-3pt);
                \end{tikzpicture}
            };
            \captionsetup{justification=centering}
            \caption{Eager SCB}
            \label{fig:mli-ex-escb}
    \end{subfigure}
    \caption{Different CB ways of handling the missed lower implications of \Cref{fig:mli-ex}.
    }
    \label{fig:mli-ex-2}
\end{figure}

To circumvent the problem of missing implications similar to \Cref{ex:MLI:CB}, we distinguish a
\emph{weak chronological backtracking (WCB)} variant of
CDCL with chronological backtracking. Within WCB, \Cref{inv:strong-watched-literals} is not necessarily satisfied, as unit implications at lower levels can be missed. To recover \Cref{inv:strong-watched-literals} in variants of CDCL with CB, we adjust and label two existing solutions in SAT solving:
(i) \emph{restoration}~\cite{DBLP:conf/sat/NadelR18}, for repairing the trail $p$ after backtracking; and (ii) \emph{prophylaxis}~\cite{DBLP:conf/sat/Nadel22}, for forcing literals at the lowest possible level.

\paragraph{Restoration.}
We call \emph{restoration} the approach in which the trail $\p$ is repaired by pushing back the propagation head (the limit between $\trail$ and $\q$) when propagating~\cite{DBLP:conf/sat/NadelR18}. Out-of-order literals are repropagated whenever they are moved in the trail during backtracking. For example, in \Cref{fig:mli-ex-rscb}, $v_3$ was the first literal that changed position during backtracking, so this is where the propagation head is set. When backtracking to level $\level$, the propagation head is set to $p_\p(\p^d[\level])$. When $v_3$ is repropagated, $v_2$ is reimplied. We, therefore, restore \Cref{inv:strong-watched-literals} by repropagating the out-of-order literals.
We call this approach \emph{restoring strong chronological backtracking} (RSCB), allowing to restore the trail $\p$ by propagating more. It is also used in \cadical~\cite{BiereFleuryPolitt-SAT-Competition-2023-solvers}.

\paragraph{Prophylaxis.}
We name \emph{prophylaxis}\footnote{``Prophylaxis'' is a chess term referring to a move that deals with a threat before it becomes a problem.}
the approach in which missed lower implications are prevented from becoming missed unit implications~\cite{DBLP:conf/sat/Nadel22}. Prophylaxis
uses an eager reimplication procedure and imposes the validity of a compatibility invariant; we formalize this property in \Cref{inv:highest-watched-literals}. That is, when a clause $C$ is detected to be a missed lower implication of $\ell$, then $\ell$ is reimplied at level $\level(C\setminus\{\ell\})$ and its reason for propagation is updated.
Prophylaxis thus enforces our \emph{backtrack compatible} \Cref{inv:highest-watched-literals} by ensuring that no clause can become unit after backtracking. Furthermore, \Cref{inv:highest-watched-literals} guarantees that literals are always propagated at the lowest level, and conflicts are detected at the lowest level.

\begin{invariant}[Backward compatible watched literals]
    \label{inv:highest-watched-literals}
    For each clause $C \in F$ watched by the two distinct watched literals $c_1, c_2$, we have
    \(
        \neg c_1 \in \trail \Rightarrow \left[c_2 \in \p \land \level(c_2) \leq \level(c_1)\right].
    \)
\end{invariant}

\begin{example}
\Cref{fig:mli-ex-escb} shows the trail after $v_2$ is reimplied. In this case, $v_2$ was a decision, and $\neg v_6$ has to be reimplied to level 1 as well. All literals are propagated at the lowest possible level. Thus, using \Cref{inv:highest-watched-literals}, the conflict $C_5$ is properly detected at level 1, instead of level 2.
\Cref{fig:mli-ex-escb} also shows that the trail $\p$ no longer follows a topological order of the implication graph. These issues have to be addressed.
\end{example}

Based on \Cref{inv:highest-watched-literals}, \emph{eager strong chronological backtracking (ESCB)} is used in~\cite{DBLP:conf/sat/Nadel22,cadical-reimply}, yielding a CDCL method with chronological backtracking that satisfies \Cref{inv:highest-watched-literals} by eager reimplication of missed lower implications. In \Cref{tab:CB-variants-in-implementation} we summarize backtracking strategies in CDCL, also listing our solution in this respect: \emph{lazy reimplication in strong chronological backtracking (LSCB)}. Our LSCB approach maintains \Cref{inv:weak-watched-literals} and \Cref{inv:strong-watched-literals}, while weakening \Cref{inv:highest-watched-literals} via \Cref{inv:lazy-highest-watched-literals}, as described next in \Cref{sec:practical-approach} and implemented in \Cref{alg:cdcl}.

\begin{table}[t]
    \centering
    \caption{\label{tab:CB-variants-in-implementation}CB variants in CDCL, together with their invariant properties.}
    \begin{tabular}{l|c c c c | l}
        & Inv.~\ref{inv:weak-watched-literals}
        & Inv.~\ref{inv:strong-watched-literals}
        & Inv.~\ref{inv:highest-watched-literals}
        & Inv.~\ref{inv:lazy-highest-watched-literals}
        & Solvers \\
        \hline
        NCB  & \cmark & \cmark & \cmark & \cmark
        & Most CDCL solvers\\
        WCB  & \cmark & \xmark & \xmark & \xmark
        & Our work -- \napsat\\
        RSCB & \cmark & \cmark & \xmark & \xmark
        & \textsc{Maple\_LCM\_Dist}~\cite{DBLP:conf/sat/NadelR18}, \cadical \\
        ESCB & \cmark & \cmark & \cmark & \cmark
        & IntelSAT and \cadical~1.9.4 \\
        LSCB & \cmark & \cmark & \xmark & \cmark
        & Our work -- \napsat, now in \cadical\\
    \end{tabular}
\end{table}

\section{Adapting CDCL with Lazy Reimplications}
\label{sec:practical-approach}
Embedding the prophylaxis approach of \Cref{sec:invariants-in-cb} in existing
CDCL data structures is highly non-trivial, due to the rigid and entangled data structures~\cite{DBLP:conf/sat/Nadel22,cadical-reimply}, see e.g.~\cite{coutelier2023chronological}. In addition, reimplying literals~\cite{DBLP:conf/sat/Nadel22,cadical-reimply} changes the implication graph, and hence the trail $\p$ is no longer a topological sort of the implications; as such, \Cref{inv:topological-order} must be restored.

While the restoration approach of \Cref{sec:invariants-in-cb} offers a practically simpler solution, restoration might require the re-propagation of a large part of $\p$ and thus can be computationally very expensive. For example, while in \Cref{fig:mli-ex-rscb} only one literal had to be re-propagated, re-propagation could be applied on an arbitrary number of literals.

\paragraph{Our solution: Lazy reimplication in CDCL.}
To overcome inefficiencies of restoration and pure prophylaxis, our work advocates a \emph{lazy reimplication} technique for CDCL with strong chronological backtracking. To ensure \Cref{inv:strong-watched-literals}, we reimply literals after backtracking. That is, \emph{we detect missed lower implications eagerly but reimply them lazily}.

Our lazy reimplication approach for CDCL-based solving is summarized in \Cref{alg:cdcl}.
In what follows, we describe the key ingredients of \Cref{alg:cdcl} and revise the CDCL invariants of \Cref{sec:invariant-framework}, adjusted to \Cref{alg:cdcl}. To this end, we introduce a lazy reimplication vector $\lazy$ to store missed lower implications, where $\lazy$ is a function from literals to clauses. Intuitively, the lazy reimplication vector $\lazy$ stores the lowest detected missed lower implication for each literal $\ell$.
The clause $\lazy(\ell) \neq \square$ is an alternative reason that would propagate $\ell$ in trail $\p$, lower than the reason $\reason(\ell)$.
Initially, no clause is assigned, and $\forall \ell.\ \lazy(\ell) = \square$ (that is, the undefined clause). \Cref{inv:lazy-reimplication-correctness} is asserted to hold during proof search.

\begin{invariant}[Lazy reimplication]
  \label{inv:lazy-reimplication-correctness}
  If the lazy reimplication reason $\lazy(\ell)$ of literal $\ell$ is defined, then the clause $\lazy(\ell)$ is a missed lower implication of $\ell$. That is,
  \begin{align*}
    \lazy(\ell) \neq \square
      ~\Rightarrow\ & ~~~~~\ell \in \p
      \land\  \ell \in \lazy(\ell) \\
       &  \mathrel\land \big(\lazy(\ell) \setminus \{\ell\} \land \p\big) \vDash \bot\\
       &  \mathrel\land \level(\lazy(\ell) \setminus \{\ell\}) < \level(\ell)
  \end{align*}
\end{invariant}

When a missed lower implication for $\ell$ is detected, then $\ell$ is not reimplied directly. Rather, we store the MLI in $\lazy$ until $\ell$ is unassigned during backtracking. For example, if a literal $\ell$ is assigned at level 3 and a missed lower implication $C$ for $\ell$ is detected with $\level(C \setminus \{\ell\}) = 1$, then backtracking to level 2 will reassign $\ell$ from level 3 to level 1 by $C$.

Using our lazy reimplication vector $\lazy$, we weaken \Cref{inv:highest-watched-literals} into \Cref{inv:lazy-highest-watched-literals} such that, during backtracking, we identify missed lower implications without requiring the re-propagation of out-of-order literals.

\begin{invariant}[Lazy backtrack compatible watched literals]
  \label{inv:lazy-highest-watched-literals}
  Consider the trail \(\p = \trail \cdot \q\). For each clause $C \in F$, if one watched literal $c_1$ of $C$ is falsified by $\trail$, then the other $c_2$ must be satisfied at a lower level, or a missed lower implication lower than $c_1$ is set in $\lazy$.
  \begin{equation*}
    \neg c_1 \in \trail \Rightarrow \Big(c_2 \in \p \land \big( \level(c_2) \leq \level(c_1) \lor \level(\lazy(c_2) \setminus \{c_2\}) \leq \level(c_1)\big)
    \Big)
    \label{eq:watched-literals}
  \end{equation*}
\end{invariant}

\paragraph{Lazy reimplication for strong chronological backtracking -- LSCB.} Guided by the reimplication and backtracking properties of \Cref{inv:lazy-reimplication-correctness} and \Cref{inv:lazy-highest-watched-literals},
\Cref{alg:cdcl} shows our LSCB algorithm for CDCL with chronological backtracking, as a slight refactoring of weak chronological backtracking (WCB). In the following algorithms, particularities of LSCB are highlighted in\blue{blue}.

\begin{algorithm}[t]
  \caption{Lazy Reimplication in CDCL with CB}
  \label{alg:cdcl}
  \begin{algorithmic}[1]
    \State $\p = \trail = \q = \p^d = \emptyset$
    \State $\forall \ell.\ \delta(\ell) = \infty$
    \State $\forall \ell.\ \wl(\ell) = \emptyset$
    \State $\forall \ell.\ \reason(\ell) = \lazy(\ell) = \square$
    \Procedure{CDCL}{$F$}
        \For{$C \in F$}
        \Comment{\parbox{0.45\textwidth}{Fill the watcher lists}}
            \State $c_1, c_2 \gets $ two literals in $C$
            \State $\wl \gets \wl[c_1 \gets \wl(c_1) \cup \{C\}][c_2 \gets \wl(c_2) \cup \{C\}]$
        \EndFor
        \While {$\top$}
            \State $C \gets \Call{BCP}{ }$ \Comment{\parbox{0.45\textwidth}{\Cref{alg:bcp}}}
            \If{$C = \top$}
                \If{$|\p| = |\mathcal{V}|$}
                \Comment{\parbox{0.45\textwidth}{All variables are assigned}}
                    \State \Return \textsc{SAT}
                \EndIf
                \State $\ell \gets \Call{Decide}{ }$
                \State $\q \gets \q \cdot \ell, \p^d \gets \p^d \cdot \ell, \level \gets \level[\ell \gets |\p^d|]$
                \State \Continue
            \EndIf
            \State $D \gets \Call{Analyze}{C}$
            \Comment{\parbox{0.45\textwidth}{\Cref{alg:conflict-analysis}}}
            \If{$\level(D) = 0$}
                \State \Return \textsc{UNSAT}
            \EndIf
            \State $d \gets $ any level between $\level(D) - 1$ and the second highest level of $D$
            \State $\Call{Backtrack}{d}$ \Comment{\parbox{0.45\textwidth}{\Cref{alg:backtracking}}}
            \State $\ell \gets$ the unassigned literals in $D$
            \State $c_2 \gets $ the second highest literal in $D$
            \State $\q \gets \q \cdot \ell, \level \gets \level[\ell \gets \level(C \setminus \{\ell\})], \reason \gets \reason[\ell \gets D]$
            \State $F \gets F \cup \{D\}$
            \Comment{\parbox{0.45\textwidth}{Does nothing if $C = D$}}
            \State $\wl \gets \wl[\ell \gets \wl(\ell) \cup \{D\}][c_2 \gets \wl(c_2) \cup \{D\}]$
        \EndWhile
    \EndProcedure
\end{algorithmic}

\end{algorithm}

An important detail should be noted upon \Cref{alg:cdcl}:
in our abstract representation, it is not explicitly checked whether the learned clause $D$ is different from the conflicting clause $C$; such a check, however, should be performed when implementing \Cref{alg:cdcl}. Indeed, as pointed out in RSCB~\cite{DBLP:conf/sat/MohleB19}, it is possible that a conflicting clause $C$ does not require conflict analysis since $C$ might already be a UIP.
However, if the highest literal $\ell$ in $C$ is a MLI, then the clause might be conflicting again after backtracking (see \Cref{alg:conflict-analysis}).

\begin{example}
Consider the example of \Cref{fig:mli-ex}. Here, the conflicting clause $C_5$ only has one literal at the highest level, and, as such, it qualifies as a UIP. Therefore no conflict analysis is required, we only backtrack to level 1, and then $C_5$ implies $v_6$ at level 1. However, if $\neg v_6$ was a missed lower implication, then backtracking to level 1 would reimply $\neg v_6$, with $C_5$ conflicting again; this time, however, $C_5$ would require conflict analysis.
\end{example}

\begin{algorithm}[t]
  \caption{Boolean Constraint Propagation in LSCB}
  \label{alg:bcp}
  \begin{algorithmic}[1]
  \Procedure{PropagateLiteral}{$\ell$}
    \State $c_1 \gets \neg \ell$
    \For{$C \in \wl[c_1]$}
      \State $c_2 \gets $ the other watched literal in $C$
      \If {$c_2 \in \p \land $ \textcolor{blue}{$\left[\level(c_2) \leq \level(c_1) \lor \level(\lambda(c_2) \setminus \{c_2\}) \leq \level(c_1)\right]$}}
        \State \Continue
        \label{line:prop:watched-literal-satisfied}
      \EndIf

      \State $r \gets$ \Call{SearchReplacement}{$C, c_1, c_2$}
      \State $\wl \gets \wl[c_1 \gets \wl(c_1) \setminus \{C\}][r \gets \wl(r) \cup \{C\}]$
      \If {$\neg r \notin \p$}
        \State \Continue
        \label{line:prop:watched-literal-replaced}
      \EndIf
      \label{line:prop:watched-literal-replaced-higher-level}
      \If {$\neg c_2 \in \p$}
      \Comment{\parbox{0.45\textwidth}{Conflict}}
        \State \Return $C$
        \label{line:prop:conflict}
      \EndIf
      \textcolor{blue}{
      \If {$c_2 \in \p$}
        \If {$\level(c_2) > \level(r) \land \level(\lazy(c_2) \setminus \{c_2\}) > \level(r)$}
        \label{line:prop:better-lower-implication-test}
          \State $\lazy \gets \lazy[c_2 \gets C]$
          \Comment{\parbox{0.45\textwidth}{New or improved MLI}}
        \EndIf
        \State \Continue
        \label{line:prop:missed-implication}
      \EndIf
      }
      \State $\q \gets \q \cdot c_2,\reason \gets \reason[c_2\gets C], \level \gets \level[c_2 \gets \level(r)]$
      \label{line:prop:propagate}
    \EndFor
    \State \Return $\top$
  \EndProcedure
\end{algorithmic}

  \begin{algorithmic}[1]
  \Procedure{BCP}{}
    \While {$\q \neq \emptyset$}
      \State $\ell \gets \Call{First}{\q}$
      \State $C \gets \Call{PropagateLiteral}{\ell}$
      \If {$C \neq \top$}
        \State \Return $C$
      \EndIf
      \State $\q \gets \q \setminus \{\ell\}, \trail \gets \trail \cdot \ell$
    \EndWhile
  \State \Return $\top$
  \EndProcedure

\end{algorithmic}

\end{algorithm}
\paragraph{Propagation in LSCB.}
When falsifying a watched literal, \Cref{alg:cdcl} might need to find a replacement candidate to become the new watched literal (line~7 of \Cref{alg:bcp}). We define the property of the candidate literal with information about its level as below.
\newpage
\begin{definition}[Candidate literal]
  \label{def:replacement-return-value}
  Let clause $C$ be watched by the literals $c_1$ and $c_2$. with $\neg c_1 \in \q$. Then, \Call{SearchReplacement}{$C, c_1, c_2$} from \Cref{alg:bcp} returns a \emph{candidate literal} $r$ for which one of the following holds:
  \begin{itemize}
    \item  \Cref{inv:highest-watched-literals} is satisfied on $C$ after $\neg c_1$ is added to $\trail$, i.e.\\
    \(
      \neg r \in (\trail \cdot \neg c_1) \Rightarrow c_2 \in \p \land \level(c_2) \leq \level(r);
    \)
    \item \(C\) is conflicting, propagating, or a MLI for \(c_2\). As such, $C \setminus \{c_2\}$ is unsatisfiable with the current assignment, and $r$ is at the highest decision level in $C \setminus \{c_2\}$, that is\\
    \(
      \big(C \setminus \{c_2\} \land \p\big) \vDash \bot ~~\land~~ \level(r) = \level(C \setminus \{c_2\})
    \)
  \end{itemize}
\end{definition}

Concretely, the \Call{SearchReplacement}{$C, c_1, c_2$} procedure iterates over literals of $C\setminus \{c_2\}$ and stops when it finds a literal $r$ that would satisfy \Cref{inv:highest-watched-literals} if $c_1$ was replaced by $r$. In case of failure, it returns the highest literal in $C\setminus \{c_2\}$.
The knowledge of the highest literal in $C\setminus\{c_2\}$ is enough to determine the nature and level of the clause.

\Cref{alg:bcp} shows our Boolean constraint propagation (BCP) algorithm adapted to support LSCB. As opposed to standard BCP, \Cref{alg:bcp} does not stop when the other watched literal is satisfied. We need the extra guarantee that either $c_2$ is implied at a level lower than $c_1$, or it is registered as a MLI before skipping the clause.
Further, when a non-falsified replacement literal cannot be found,
\Cref{alg:bcp} still changes the watched literal.
While this is not always strictly necessary (for example, in conflicts), systematically swapping the highest literal allows checking the level of the clause in constant time and provides cheap useful properties to the clause.

\paragraph{Backtracking in LSCB.}
When backtracking, our LSCB approach has the information of whether a clause $C$ violates \Cref{inv:strong-watched-literals}. Therefore, \Cref{alg:backtracking} can directly imply those missed lower implications (line~15 of \Cref{alg:backtracking}).

The order in which literals are reimplied in \Cref{alg:backtracking} is not important, as shown later in \Cref{thm:reimplied-literal-topological-order}. It is, however, unclear whether a specific order would impact performance in problems where the stability of literal position in the trail is important. In such cases, ordering the reimplications in increasing levels might be beneficial.

\begin{algorithm}[t]
  \caption{Backtracking and Reimplication.}
  \label{alg:backtracking}
  \begin{algorithmic}[1]
  \Procedure{Backtrack}{$d$}
    \State $\Lambda \gets \emptyset$
    \Comment{\parbox{0.45\textwidth}{$\Lambda$ is the set that will be reimplied}}
    \State $\p = \trail \cdot \q$
    \For{$\ell \in \p$}
        \If{$\level(\ell) > d$}
            \textcolor{blue}{
            \If{$\level(\lazy(\ell) \setminus \{\ell\}) \leq d$}
                \State $\Lambda \gets \Lambda \cup \{\lazy(\ell)\}$
                \Comment{\parbox{0.45\textwidth}{Store the MLI for later}}
            \EndIf
            }
            \State $\p \gets \p \setminus \{\ell\}$
            \Comment{\parbox{0.45\textwidth}{Unassign $\ell$}}
            \State $\level \gets \level[\ell \gets \infty], \reason \gets \reason[\ell \gets \square]$
            \textcolor{blue}{
            \State $\lazy \gets \lazy[\ell \gets \square]$
            \Comment{\parbox{0.45\textwidth}{$\lazy(\ell)$ is either used, or no longer valid}}
            }
        \EndIf
    \EndFor
    \State $\p^d \gets \p \cap \p^d$
    \Comment{\parbox{0.45\textwidth}{Remove the unassigned literals}}
    \State $\trail \gets \p \cap \trail$
    \State $\q \gets \p \setminus \trail$
    \textcolor{blue}{\For {$C \in \Lambda$}
    \Comment{\parbox{0.45\textwidth}{Reimplying the MLI}}
      \State $\ell \gets $ the unassigned literal in $C$
      \State $\q \gets \q \cdot \ell, \reason \gets \reason[\ell \gets C], \level \gets \level[\ell \gets \level(C \setminus \{\ell\})]$
    \EndFor}
  \EndProcedure
\end{algorithmic}

\end{algorithm}

\paragraph{Conflict analysis with MLI.}
As opposed to traditional backtracking, \Cref{alg:cdcl} does not guarantee that, once it backtracks to a level lower than the level of the learned clause $D$, the clause $D$ will be propagating. Indeed, let the falsified learned clause $D = \{c_1, c_2, \dots, c_m\}$ with $c_1$ a unique literal at level $\level(D)$. If we backtrack to level $\level(D) - 1$, $c_1$ might be reimplied at a lower level, and $D$ would still be a conflict. In response to this, we propose the following two solutions:
\begin{enumerate}[({Analyze}-1)]
  \item \label{strat:reanalyze} we analyze the conflict and backtrack again until we get a unit clause;
  \item \label{strat:analyze-resolve-UIP} we perform conflict analysis with the knowledge of missed lower implications.
\end{enumerate}
In \Cref{alg:conflict-analysis} we chose option \ref{strat:analyze-resolve-UIP}.
Option \ref{strat:reanalyze} will generate the same clause in the end, but might create some unnecessary ones in the process. We empirically check our intuition in \Cref{sec:empirical-results} and demonstrate that option \ref{strat:analyze-resolve-UIP} indeed works better.

We refer to $D\otimes_\ell C'$ as the result of binary resolution applied to the clauses $C$ and $D$ over the literal $\ell$.

In \Cref{alg:conflict-analysis}, when possible, we use the lazy reimplication reason $\lazy(\ell)$ instead of the real reason $\reason(\ell)$ during conflict analysis. The lazy reason $\lazy(\ell)$ is guaranteed to introduce literals at a level lower than $\level(C)$, making it converge to a UIP faster.
Once a UIP is obtained, \Cref{alg:conflict-analysis} does not stop if there exists a missed lower implication for the last literal at the conflict level. Furthermore, we adapted the learnt clause minimization approach~\cite{DBLP:conf/sat/SorenssonB09}, adjusted to
\Cref{alg:conflict-analysis} so that both reasons are checked if the literal can be removed.

\begin{algorithm}[t]
  \caption{Conflict Analysis.}
  \label{alg:conflict-analysis}
  \begin{algorithmic}[1]
  \Procedure{Analyze}{$C$}
    \State $\pi \gets \trail \cdot \q$
    \Comment{\parbox{0.45\textwidth}{Array version of the trail.}}
    \State $D \gets C$
    \Comment{\parbox{0.45\textwidth}{Current learned clause.}}
    \State $n \gets \left|\left\{\ell: \ell \in D \land \level(\ell) = \level(D)\right\}\right|$
    \Comment{\parbox{0.45\textwidth}{Number of literals at the highest level.}}
    \While {$\top$}
      \State $\ell \gets $ the last literal in $\p$ falsified in $D$ at level $\level(D)$
      \If {$n = 1 \blue{\land \lazy(\ell) = \square}$}
        \State \Return $D$
      \EndIf

      \State $C' \gets \reason(\ell)$
      \If {\textcolor{blue}{$\lazy(\ell) \neq \square$}}
        \State \textcolor{blue}{$C' \gets \lazy(\ell)$}
      \EndIf
      \State $D \gets D\otimes_\ell C'$
      \State $n \gets \left|\left\{\ell: \ell \in D \land \level(\ell) = \level(D)\right\}\right|$
    \EndWhile
  \EndProcedure
\end{algorithmic}

\end{algorithm}

\begin{example}
    \Cref{fig:mli-in-analysis} shows a conflict after \Cref{alg:bcp} detected a missed lower implication $C_6$. From \Cref{inv:lazy-highest-watched-literals}, we have $\lazy(\neg v_3) = C_6$. \Cref{alg:cdcl} will then trigger \Cref{alg:conflict-analysis} to analyse the conflict on $C_5$. During conflict analysis with \Cref{alg:conflict-analysis}, we start from the conflicting clause
$D = \red{\neg v_7}\onLevel{2}
\lor \red{v_5}\onLevel{2}
\lor \red{v_6}\onLevel{1}$
and apply the resolution $D \gets D \otimes_{\neg v_7} C_4$ to obtain
$D = \red{v_5}\onLevel{2}
\lor \red{v_3}\onLevel{2}
\lor \red{v_6}\onLevel{1}$.
We once again apply resolution and have $D \gets D \otimes_{v_5} C_2$, yielding the clause
$D = \red{v_3}\onLevel{2}
\lor \red{v_6}\onLevel{1}
\lor \red{\neg v_4}\onLevel{1}$.
As this $D$ is a UIP, most CDCL approaches would stop conflict analysis here. However, in our LSCB approach we know that $v_3$ can be reimplied at level 1. Therefore, after backtracking to level 1 and reimplying $\neg v_3$ with \Cref{alg:backtracking}, the clause $D$ would still be conflicting and conflict analysis would need to be triggered again. Instead we apply the resolution $D \gets D \otimes_{v_3} C_6$ to get a clause at level 1, namely clause
$D = \red{v_6}\onLevel{1}
\lor \red{\neg v_4}\onLevel{1}
\lor \red{v_2}\onLevel{1}$.
We then continue until the procedure at level 1 and obtain the final clause $D = \red{v_2} \onLevel{1}$.
\end{example}
\begin{figure}[t]
\centering
  \begin{subfigure}{0.35\textwidth}
  \centering
  \raisebox{\height}{
    \begin{tabular}{l}
        $C_{1} = \green{\neg v_{2}}\onLevel{1}
            \lor \red{\phantom{\neg} v_{1}}\onLevel{1}$\\
        $C_{2} = \green{\neg v_5}\onLevel{2}
            \lor \red{\phantom{\neg} v_3}\onLevel{2}
            \lor \red{\neg v_4}\onLevel{1}$\\
        $C_{3} = \green{\neg v_6}\onLevel{1}
            \lor \red{\phantom{\neg} v_2}\onLevel{1}
            \lor \red{\neg v_4}\onLevel{1}$\\
        $C_{4} = \green{\phantom{\neg} v_7}\onLevel{2}
            \lor \red{\phantom{\neg} v_5}\onLevel{2}
            \lor \red{\phantom{\neg} v_3}\onLevel{1}$\\
        $C_{5} = \red{\phantom{\neg} v_5}\onLevel{2}
            \lor \red{\neg v_7}\onLevel{2}
            \lor \red{\phantom{\neg} v_6}\onLevel{1}$\\
        $C_{6} = \green{\neg v_3}\onLevel{2}
            \lor \red{\neg v_4}\onLevel{1}
            \lor \red{\phantom{\neg} v_2}\onLevel{1}$\\
        $C_{7} = \green{\phantom{\neg} v_{4}}\onLevel{1}
            \lor \red{\phantom{\neg} v_{2}}\onLevel{1}$\\
    \end{tabular}
  }
  \end{subfigure}
    \hfill 
    \begin{subfigure}{0.6\textwidth}
        \centering
        \tikz \node [scale=0.8] {
                \begin{tikzpicture}
                    \draw[thick] (0, 0) -- node[below, yshift = -0.1cm] {\rotatebox{270}{decision}} (1, 0);
\draw[thick] (0, 0.000000) -- (0, 0.750000);
\draw[thick] (0, 0.750000) -- node[below] {$\delta = 1$} node[above] {$\neg v_{1}$} (1, 0.750000);
\draw[thick] (1, 0) -- node[below, yshift = -0.1cm] {$C_1$} (2, 0);
\draw[thick] (1, 0.750000) -- node[above] {$\neg v_{2}$} (2, 0.750000);
\draw[thick] (2, 0) -- node[below, yshift = -0.1cm] {\rotatebox{270}{decision}} (3, 0);
\draw[thick] (2, 0.750000) -- (2, 1.500000);
\draw[thick] (2, 1.500000) -- node[below] {$\delta = 2$} node[above] {$\neg v_{3}$} (3, 1.500000);
\draw[thick] (3, 0) -- node[below, yshift = -0.1cm] {$C_7$} (4, 0);
\draw[thick] (3, 1.500000) -- (3, 0.750000);
\draw[thick] (3, 0.750000) -- node[above] {$v_{4}$} (4, 0.750000);
\draw[thick] (4, 0) -- node[below, yshift = -0.1cm] {$C_2$} (5, 0);
\draw[thick] (4, 0.750000) -- (4, 1.500000);
\draw[thick] (4, 1.500000) -- node[above] {$\neg v_{5}$} (5, 1.500000);
\draw[thick] (5, 0) -- node[below, yshift = -0.1cm] {$C_3$} (6, 0);
\draw[thick] (5, 1.500000) -- (5, 0.750000);
\draw[thick] (5, 0.750000) -- node[above] {$\neg v_{6}$} (6, 0.750000);
\draw[thick] (6, 0.750000) -- (6, 1.500000);
\draw[thick] (6, 0) -- node[below, yshift = -0.1cm] {$C_4$} (7, 0);
\draw[thick] (6, 1.500000) -- node[above] {$v_{7}$} (7, 1.500000);
\draw[thick, red] (7, 0) -- node[below, yshift = -0.1cm] {$C_5$} (8, 0);
\draw[thick, red] (7, 1.500000) -- node[above, red] {$\bot$} (8, 1.500000);

\draw[thick, dotted] (4, -1.500000) -- (4, 2.500000) node[right, yshift=-0.2cm] {$\q \rightarrow$} node[left, yshift=-0.2cm] {$\leftarrow \trail$};

\foreach \x in {0,1,...,8}
  \draw[thick] (\x,3pt)--(\x,-3pt);
                \end{tikzpicture}
            };
    \end{subfigure}
    \caption{The clause $\neg v_{3}\onLevel{2} \lor \neg v_4\onLevel{1} \lor v_2\onLevel{1}$ is a missed lower implication in this example. $v_2$ and $\neg v_4$ are falsified at level 1, whereas $\neg v_{3}$ is only satisfied at level 2.}
    \label{fig:mli-in-analysis}
\end{figure}
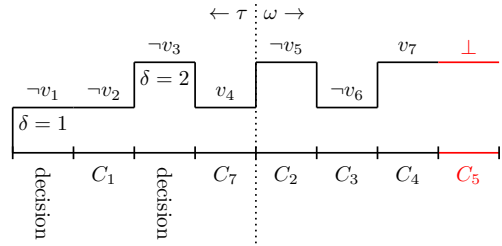
\newpage

\section{Soundness of Lazy Reimplication}
\label{sec:soundness}
This section proves the soundness and completeness\footnote{with details also in the code base of \napsat} of our LSCB approach given in \Cref{alg:cdcl}. We note that \Cref{alg:cdcl} implements strong chronological backtracking and does not miss any implication; as such, \Cref{inv:strong-watched-literals} holds.

\begin{theorem}[Soundness of conflict analysis]
    \label{thm:learned-clause-sound}
    Let $C \in F$ be a conflicting clause with the partial assignment $\p$. Then, conflict analysis in $\Call{Analyze}{C}$ from \Cref{alg:cdcl} returns a conflicting clause that is implied by the clause set.
\end{theorem}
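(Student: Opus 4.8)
The plan is to establish two loop invariants of $\Call{Analyze}{C}$ from \Cref{alg:conflict-analysis} and carry them through to the return statement: for the current clause $D$ we maintain \textbf{(i)} $D$ is falsified by the partial assignment $\p$ (conflicting), and \textbf{(ii)} $F \models D$ (implied by the clause set). Both are proved by induction on the number of resolution steps, i.e.\ on iterations of the \textbf{while} loop, the only place $D$ is modified.

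First I would dispatch property \textbf{(ii)}. The base case is immediate, since the loop initializes $D \gets C$ with $C \in F$, whence $F \models D$. For the inductive step, each iteration replaces $D$ by the resolvent $D \otimes_\ell C'$, where $C'$ is either $\reason(\ell)$ or (in the blue branch) $\lazy(\ell)$. Both are genuine clauses of the database $F$: reasons are clauses that triggered propagation, and by \Cref{inv:lazy-reimplication-correctness} every stored lazy reason $\lazy(\ell) \neq \square$ is a missed lower implication taken from $F$. Hence $F \models C'$, and since binary resolution is sound, $F \models D$ together with $F \models C'$ gives $F \models D \otimes_\ell C'$. Thus \textbf{(ii)} is preserved and the returned clause is implied by $F$.

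Next I would prove property \textbf{(i)} by the same induction. Initially $D = C$ is the detected conflict, so every literal of $D$ is falsified under $\p$. In a resolution step the algorithm selects $\ell$ as the last trail literal whose complement occurs in $D$, so $\neg\ell \in D$ and $D \setminus \{\neg\ell\}$ is still entirely falsified. The resolvent equals $\big(D \setminus \{\neg\ell\}\big) \cup \big(C' \setminus \{\ell\}\big)$, so it suffices to show $\ell \in C'$ and that $C' \setminus \{\ell\}$ is falsified by $\p$. For $C' = \reason(\ell)$ this is exactly \Cref{inv:implied-literals}, which gives $\ell \in \reason(\ell)$ and $\reason(\ell) \setminus \{\ell\},\,\p \vDash \bot$. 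For $C' = \lazy(\ell)$ the analogous facts $\ell \in \lazy(\ell)$ and $\big(\lazy(\ell) \setminus \{\ell\} \land \p\big) \vDash \bot$ are supplied by \Cref{inv:lazy-reimplication-correctness}. In both cases the resolvent is falsified (note falsification is robust even if $\neg\ell$ were to reappear, since $\ell \in \p$), so \textbf{(i)} holds; in particular the clause returned under the exit test $n = 1 \blue{{}\land \lazy(\ell) = \square}$ is conflicting.

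I expect the main obstacle, and the genuinely new content relative to textbook conflict analysis, to be the lazy branch $C' \gets \lazy(\ell)$: one must check that substituting a lazy reason for the real reason breaks neither invariant. This reduces cleanly to \Cref{inv:lazy-reimplication-correctness}, which furnishes the three facts the step needs --- $\ell \in \lazy(\ell)$, falsification of $\lazy(\ell) \setminus \{\ell\}$ under $\p$, and membership $\lazy(\ell) \in F$ --- so that the resolution remains sound and falsification-preserving. A minor accompanying point is to argue that whenever the loop resolves (rather than returns), the chosen $C'$ is a defined clause, i.e.\ $\reason(\ell) \neq \square$ unless $\lazy(\ell) \neq \square$ overrides it. Termination of the loop itself is orthogonal to this soundness statement and would be argued separately, relying on the topological ordering guaranteed by \Cref{inv:topological-order}.
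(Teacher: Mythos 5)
Your proof is correct and follows essentially the same route as the paper's: induction over the resolution steps of \Cref{alg:conflict-analysis}, maintaining that $D$ is both falsified by $\p$ and implied by $F$, where each step is justified by the defining properties of $\reason$ (\Cref{inv:implied-literals}) and $\lazy$ (\Cref{inv:lazy-reimplication-correctness}) together with soundness of binary resolution and $C' \in F$. The paper's proof is just a terser version of the same argument, so no substantive difference remains.
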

\begin{proof}
    The starting clause $D \gets C$ is conflicting. At each step, $D$ is resolved with a clause $C'$ such that $C' = \reason(\ell)$ or $C' = \lazy(\ell)$, with $\ell \in C'$ and $\neg \ell \in D$. From the definition of $\reason$ and $\lazy$, we have $\big(C' \setminus \{\ell\}\land \p\big) \vDash \bot$. Therefore, the clause $D \gets D \otimes_{\neg\ell} C'$ is conflicting, and implied by F, since $C' \in F$.
\end{proof}

\begin{theorem}[No missed unit implication]
    \label{thm:lazy-highest}
    \Cref{alg:cdcl} satisfies \Cref{inv:lazy-highest-watched-literals}. As such, our LSCB method in \Cref{alg:cdcl} does not miss unit implications.
\end{theorem}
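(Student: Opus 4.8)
The plan is to show that \Cref{inv:lazy-highest-watched-literals} is an inductive invariant of \Cref{alg:cdcl}. I argue by induction over the solver's execution, treating \Cref{inv:lazy-highest-watched-literals} as a global invariant and checking that every line which can modify the trail $\p = \trail\cdot\q$, the watched lists, the levels $\level$, or the lazy vector $\lazy$ preserves it. The base case is immediate: after initialization $\trail = \emptyset$, so the premise $\neg c_1\in\trail$ is false for every clause and the invariant holds vacuously. The second sentence of the theorem then comes for free, because the consequent of \Cref{inv:lazy-highest-watched-literals} contains the conjunct $c_2\in\p$; hence it entails \Cref{inv:strong-watched-literals}, which is exactly the statement that no unit implication is missed.

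For the inductive step I split on the kind of state change. \emph{Decide} only appends a literal to $\q$: it leaves $\trail$ untouched, so no premise $\neg c_1\in\trail$ is newly created, and it enlarges $\p$ monotonically while keeping every existing level fixed, so each consequent $c_2\in\p$ and each level comparison is preserved. \emph{Clause learning} installs $D$ with watches on its asserting literal $\ell$ and its second-highest literal $c_2$; since $\ell$ is pushed to $\q$ (hence $\ell\in\p$) at the second-highest level of $D$, i.e.\ $\level(\ell)=\level(c_2)$, the only nontrivial premise $\neg c_2\in\trail$ is discharged by $c_2\in\p$ together with $\level(\ell)\le\level(c_2)$. Both reimplication and propagation only append \emph{satisfied} literals to $\q$ and possibly lower a $\lazy$ entry; neither creates a premise in $\trail$, and both can only strengthen consequents, so they are harmless for clauses other than the one being processed.

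The substantive cases are propagation and backtracking. In \emph{propagation}, moving $\ell$ into $\trail$ makes $\neg\ell=c_1\in\trail$ and thus can only create premises for the clauses in $\wl(c_1)$, which I treat clause by clause along the loop of \textsc{PropagateLiteral}. A clause that passes the guard keeps $c_1$ watched and satisfies the consequent verbatim. Otherwise its watch is moved to the candidate literal $r$ of \Cref{def:replacement-return-value}, so the clause is henceforth watched by $r$ and $c_2$ and the premise to re-establish concerns $r$, not $c_1$. If $\neg r\notin\p$ then $r$ is not falsified, so the candidate cannot be of the conflicting/MLI kind (that kind falsifies all of $C\setminus\{c_2\}$, including $r$); it is therefore of the first kind, i.e.\ \Cref{inv:highest-watched-literals}, and a fortiori \Cref{inv:lazy-highest-watched-literals}, holds for the clause. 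If $\neg r\in\p$, the clause is either conflicting (the procedure returns and the situation is handled by backtracking), or $c_2$ is unassigned and is propagated at level $\level(r)$, giving $\level(c_2)=\level(r)$ so the first disjunct holds for watch $r$, or $c_2$ is already satisfied and the MLI is recorded as $\lazy(c_2)\gets C$ with $\level(\lazy(c_2)\setminus\{c_2\})=\level(r)$, giving the second disjunct for watch $r$; when the guard for writing $\lazy$ fails, one of the two disjuncts already held.

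\emph{Backtracking} to level $d$ is the crux and the main obstacle. Removing literals can only retract premises, but it may unassign the literal $c_2$ that was the unique satisfier of a clause whose other watched literal $c_1$ survives, so that $\level(c_1)\le d$ and $\neg c_1\in\trail$ still holds. Applying the induction hypothesis to the pre-backtrack state leaves two cases. If $\level(c_2)\le\level(c_1)\le d$, then $c_2$ is not removed and the first disjunct persists. Otherwise $\level(\lazy(c_2)\setminus\{c_2\})\le\level(c_1)\le d$; if $c_2$ itself is not removed the surviving $\lazy(c_2)$ still supplies the second disjunct, and if $c_2$ is removed (so $\level(c_2)>d$) the reimplication branch of \Cref{alg:backtracking} fires — its guard $\level(\lazy(c_2)\setminus\{c_2\})\le d$ is met — and reassigns $c_2$ through $\lazy(c_2)$ at level $\level(\lazy(c_2)\setminus\{c_2\})\le\level(c_1)$, restoring the first disjunct. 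The delicate point, where I would spend the most care, is that this reimplication is well defined: I rely on \Cref{inv:lazy-reimplication-correctness} to guarantee that $\lazy(c_2)$ is a genuine missed lower implication of $c_2$ whose side literals all sit at levels $\le\level(\lazy(c_2)\setminus\{c_2\})\le d$, hence survive the backtrack and keep $c_2$ as the unique unassigned literal available for reimplication at the claimed level. Finally, I would note that the loop over $\Lambda$ may process these reimplications in any order, which is exactly the content of \Cref{thm:reimplied-literal-topological-order}, so the argument does not depend on their sequencing.
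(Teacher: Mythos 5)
Your proposal is correct and follows essentially the same route as the paper: treat \Cref{inv:lazy-highest-watched-literals} as an inductive invariant and verify preservation case by case over the operations of \Cref{alg:cdcl} (decide, clause learning, propagation via \Cref{alg:bcp}, backtracking with reimplication via \Cref{alg:backtracking}), resolving the backtracking case exactly as the paper does by splitting on which disjunct of the induction hypothesis held before literals were removed and letting the reimplication branch restore the first disjunct. If anything you are more explicit than the paper --- which compresses the BCP case into a Hoare triple discharged ``by structural induction,'' factors the entailment of \Cref{inv:strong-watched-literals} out into a separate corollary, and is looser in its backtracking case split --- and your only omission, namely that \textsc{Analyze} is read-only and therefore preserves every invariant, is trivial and implicitly covered by your restriction to state-changing steps.
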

\newpage
\begin{proof}
    We prove that \Cref{inv:lazy-highest-watched-literals} holds for each building block of \Cref{alg:cdcl}.
    \paragraph{BCP.}
    \Cref{inv:lazy-highest-watched-literals} trivially holds at the starting state, where $\p = \emptyset$.
    Further, during the propagation of one literal, \Cref{alg:bcp} ensures that for each clause $C \in F$ watched by $c_1$ and $c_2$, the following Hoare triple holds
    $$\{P\} \Call{PropagateLiterall}{\ell}\{Q\},$$
    where
    \begin{align*}
        P \equiv \neg c_1 \in \trail \Rightarrow \left[c_2 \in \p \land \left[ \level(c_2) \leq \level(c_1) \lor \level(\lazy(c_2) \setminus \{c_2\}) \leq \level(c_1)\right]\right]\\
        Q \equiv \neg c_1 \in (\trail \cdot \ell) \Rightarrow \left[c_2 \in \p \land \left[ \level(c_2) \leq \level(c_1) \lor \level(\lazy(c_2) \setminus \{c_2\}) \leq \level(c_1)\right]\right]
    \end{align*}
    By structural induction over the statements of \Cref{alg:bcp}, we conclude that \Cref{inv:lazy-highest-watched-literals} is maintained by BCP.

    \paragraph{Backtracking.}
    During backtracking in \Cref{alg:backtracking}, each literal $c_i$ is inspected: $c_i$ is either removed from the trail $\p$ or $c_i$ is kept. Violating \Cref{inv:lazy-highest-watched-literals} means that a literal $c_2$ from the trail is removed such that $\neg c_1 \in \trail \land c_2 \notin \p$ for some clause $C = \{c_1, c_2, \dots, c_m\}$ (since the levels are not altered). However, this case is rectified, since either $\level(c_1) \leq \level(c_2)$ (and then $\neg c_1$ would be removed from $\trail$), or $\level(\lazy(c_2) \setminus \{c_2\}) \leq \level(c_1)$ (and then $c_2$ would be reimplied at level $\level(\lazy(c_2) \setminus \{c_2\})$ and $c_2 \in \p \land \level(c_2) \leq \level(c_1)$ would be true), or $\level(\lazy(c_2) \setminus \{c_2\}) > \level(c_1)$ (and then $\neg c_1$ is also backtracked). As such, backtracking in \Cref{alg:backtracking} preserves \Cref{inv:lazy-highest-watched-literals}.

    \paragraph{Analysis.}
    Within conflict analysis in \Cref{alg:conflict-analysis}, the state of CDCL is not modified, only read. Therefore, any invariant that held before \Cref{alg:conflict-analysis} also holds after \Cref{alg:conflict-analysis}.

    \paragraph{CDCL.} We finally ensure that
    \Cref{inv:lazy-highest-watched-literals} is maintained by \Cref{alg:cdcl} also during its decision step and while adding a clause to the formula $F$.
    First, deciding still preserves \Cref{inv:lazy-highest-watched-literals}, since it merely adds a non-assigned literal to the propagation queue $\q$. Second, after backtracking in \Cref{alg:cdcl}, we know by construction that the learned clause will have a single literal $\ell$ that is unassigned. This literal $\ell$ is then implied at level $\level(D\setminus\{\ell\})$, satisfying \Cref{inv:lazy-highest-watched-literals} since the second watched literal $c_2$ is falsified at level $\level(D\setminus\{\ell\})$.
\end{proof}

\begin{corollary}[No missed conflict/implication]
    \label{cor:-strong-watched-literals}
    Our LSCB method from \Cref{alg:cdcl} preserves the strong watched literal property of \Cref{inv:strong-watched-literals}.
\end{corollary}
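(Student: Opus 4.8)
The plan is to derive \Cref{inv:strong-watched-literals} as a direct logical consequence of \Cref{inv:lazy-highest-watched-literals}, which \Cref{thm:lazy-highest} has just established for \Cref{alg:cdcl}. Both invariants quantify over the same objects (clauses $C \in F$ watched by two distinct literals $c_1, c_2$) and share the identical antecedent $\neg c_1 \in \trail$; they differ only in their consequents. The consequent of \Cref{inv:lazy-highest-watched-literals} is the conjunction $c_2 \in \p \land \big(\level(c_2) \leq \level(c_1) \lor \level(\lazy(c_2) \setminus \{c_2\}) \leq \level(c_1)\big)$, whereas the consequent of \Cref{inv:strong-watched-literals} is merely $c_2 \in \p$. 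The former therefore entails the latter simply by dropping the level disjunct, so no reasoning about the algorithm's dynamics is needed beyond citing the previous theorem.

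Concretely, I would fix an arbitrary clause $C \in F$ watched by $c_1, c_2$ and assume $\neg c_1 \in \trail$. By \Cref{thm:lazy-highest}, \Cref{inv:lazy-highest-watched-literals} holds throughout \Cref{alg:cdcl}, so instantiating it on $C$ yields $c_2 \in \p$ together with the level disjunction. Projecting onto the first conjunct gives $c_2 \in \p$, which is exactly the conclusion demanded by \Cref{inv:strong-watched-literals}. As $C$ was arbitrary, \Cref{inv:strong-watched-literals} is maintained, establishing the corollary.

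There is essentially no obstacle to overcome: the corollary is a pure weakening, and all the genuine work — tracking how BCP, backtracking, conflict analysis, and the decision step each preserve the stronger, level-carrying consequent — has already been discharged in the proof of \Cref{thm:lazy-highest}. The only point worth emphasizing is that the two invariants are phrased with the same hypothesis over $\trail$ rather than over the full trail $\p$, so the entailment holds clause-by-clause without re-examining the lazy reimplication vector $\lazy$ or any other state of the solver.
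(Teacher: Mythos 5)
Your proposal is correct and matches the paper's intent exactly: the paper states this as a corollary with no separate proof precisely because \Cref{inv:strong-watched-literals} is the first conjunct of the consequent of \Cref{inv:lazy-highest-watched-literals}, so it follows from \Cref{thm:lazy-highest} by dropping the level disjunction, just as you argue.
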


Based on the results above, we conclude the soundness and completeness of LSCB.

\begin{theorem}[LSCB soundness and completeness]
    Lazy reimplication with strong chronological backtracking from \Cref{alg:cdcl} is sound and complete.
\end{theorem}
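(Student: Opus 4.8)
The plan is to decompose the statement into \emph{soundness} (every answer the solver returns is correct) and \emph{completeness} (the \textbf{while} loop of \Cref{alg:cdcl} always terminates with such an answer), reusing the invariants already established in this section. Soundness in turn splits along the two exit points of the loop: the \textsc{UNSAT} return when $\level(D)=0$, and the \textsc{SAT} return when \textsc{BCP} reports no conflict while $|\p|=|\V|$.

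For the \textsc{UNSAT} case I would invoke \Cref{thm:learned-clause-sound}: the analysed clause $D$ is conflicting and entailed by $F$. Since decisions live on levels $\geq 1$, \Cref{inv:implied-literals} gives, by induction along the reason chains, that every literal assigned at level $0$ is entailed by $F$; as $\level(D)=0$ forces each literal of $D$ to be falsified at level $0$, we obtain $F\models\neg c$ for every $c\in D$, which together with $F\models D$ yields $F\models\bot$. For the \textsc{SAT} case, \textsc{BCP} returning the no-conflict value means $\q=\emptyset$, so $\p=\trail$ is a total assignment. By \Cref{inv:weak-watched-literals} no clause has both watched literals falsified in $\trail$; since the assignment is total, each clause then has a satisfied watched literal and is therefore satisfied, so $\p\models F$. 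Here \Cref{cor:-strong-watched-literals} additionally guarantees that the solver does not halt while concealing an unprocessed conflict or implication behind chronological backtracking.

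Completeness is where the real work lies: I must show the loop cannot iterate forever. Between two conflicts, \textsc{BCP} and \textsc{Decide} only append to the trail, whose length is bounded by $|\V|$, so only finitely many conflict-free iterations occur, and it suffices to bound the number of conflicts. For this I would exhibit a measure on solver states that strictly progresses in a well-founded order: after \textsc{Analyze} and \textsc{Backtrack}, the learned clause $D$ is asserting and immediately propagates its unique unassigned literal at a strictly lower level, so each conflict refines the assignment irreversibly. Adapting this to the established termination arguments for (strong) chronological backtracking~\cite{DBLP:conf/sat/MohleB19,DBLP:conf/sat/Nadel22} requires checking that their preconditions hold in our setting, which is precisely what \Cref{thm:lazy-highest} and \Cref{inv:topological-order} supply.

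The main obstacle is making this measure robust to the two features separating LSCB from textbook NCB. First, chronological backtracking retains part of the trail, so levels are no longer monotone along $\p$ and the usual ``trail as a number'' argument does not apply verbatim. Second, the lazy reimplication of \Cref{alg:backtracking} relocates literals and rewrites their reasons. The key observation I would lean on is that reimplication only ever \emph{lowers} a literal's level and never raises it, and that \Cref{thm:reimplied-literal-topological-order} restores \Cref{inv:topological-order} after reimplication so that conflict analysis still converges to a unique implication point. Combined with the MLI-aware analysis of \Cref{alg:conflict-analysis} --- which keeps resolving past a UIP whenever the last conflict-level literal carries a missed lower implication --- this ensures that the learned clause is genuinely asserting at a strictly lower level after backtracking and cannot immediately re-conflict, ruling out the looping behaviour that naive reimplication would risk. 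Turning this informal ``levels only decrease'' insight into a single well-founded rank is the step I expect to demand the most care.
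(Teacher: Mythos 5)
Your soundness half follows essentially the same route as the paper's proof: the paper likewise splits on the two exit points, invokes \Cref{thm:learned-clause-sound} (plus the induction $\phi \models F$, $F \models C$ gives $\phi \models F \cup \{C\}$) for the \textsc{UNSAT} return, and a watched-literal invariant for the \textsc{SAT} return. Your two refinements there are both correct and, if anything, tighter than the paper's wording: you derive $F \models \bot$ by applying \Cref{inv:implied-literals} along level-$0$ reason chains, where the paper merely asserts that a level-$0$ conflict yields some $F' \subseteq F$ with $F' \models \bot$; and you get the \textsc{SAT} case from \Cref{inv:weak-watched-literals} plus totality of the assignment, where the paper appeals to \Cref{inv:strong-watched-literals} via \Cref{cor:-strong-watched-literals}. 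The genuine divergence is completeness: the paper's proof stops at partial correctness of the two answers and never argues termination, implicitly deferring it to the cited chronological-backtracking literature, whereas you correctly identify that completeness requires the while loop of \Cref{alg:cdcl} to halt, and you sketch a well-founded measure, leaning on the MLI-aware analysis of \Cref{alg:conflict-analysis} to ensure the learned clause is genuinely asserting after backtracking (so no immediate re-conflict), and on \Cref{thm:reimplied-literal-topological-order} to keep reimplication compatible with \Cref{inv:topological-order}. That sketch has the right shape --- it is the lexicographic-trail termination argument of~\cite{DBLP:conf/sat/MohleB19} adapted to lazy reimplication, with ``reimplication only lowers levels'' as the key fact making the adaptation work --- though, as you acknowledge, you leave the construction of the single well-founded rank open. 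In short: your proposal proves everything the paper's proof does, by the same means, and additionally attacks the termination question the paper silently skips; what the paper's version buys in exchange is brevity.
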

\begin{proof}
    \Cref{thm:learned-clause-sound} implies that clauses added to the clause set are implied by $F$. By induction, if $\phi$ is the original CNF, then if $\phi \models F$ and $F \models C$, then $\phi \models F \cup \{C\}$.
    Furthermore, from \Cref{cor:-strong-watched-literals} we conclude that \Cref{inv:strong-watched-literals} holds.

    \Cref{alg:cdcl} returns unsat iff there exists a conflict at level 0; that is, there exists a set of clauses $F' \subseteq F$ such that $F' \models \bot$. As $\phi \models F$, then $\phi \models \bot$, and thus $\phi$ is unsatisfiable.
    Otherwise, \Cref{alg:cdcl} returns SAT if a model $\trail$ exists such that every variable has been assigned and propagated ($\p = \trail$). Based on \Cref{inv:strong-watched-literals},
    no conflict is possible and $\phi \models \trail$.
\end{proof}

\begin{theorem}[Topological order in LSCB]
    \label{thm:reimplied-literal-topological-order}
    The literals reimplied by the backtracking procedure of \Cref{alg:backtracking} respect the topological order of the implication graph.
\end{theorem}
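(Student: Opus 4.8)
The plan is to verify \Cref{inv:topological-order} directly for each literal that \Cref{alg:backtracking} appends during its reimplication loop, using the level guard that controls membership in the set $\Lambda$. I would fix a reimplied literal $\ell$, whose reason after reimplication is the clause $C = \lazy(\ell)$ recorded before backtracking. Since the topological-order condition for $\ell$ quantifies over $\ell' \in \reason(\ell) \setminus \{\ell\} = C \setminus \{\ell\}$, the goal reduces to showing that every predecessor $\neg \ell'$ already sits before $\ell$ in the updated trail $\p$.

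The heart of the argument is a level-counting observation. The clause $C$ is added to $\Lambda$ only when the guard $\level(\lazy(\ell) \setminus \{\ell\}) \leq d$ holds, and by definition $\level(C \setminus \{\ell\}) = \max_{\ell' \in C \setminus \{\ell\}} \level(\ell')$. Hence every $\ell' \in C \setminus \{\ell\}$ satisfies $\level(\ell') \leq d$, so the literal $\neg \ell'$ lying in the trail (as $C \setminus \{\ell\}$ is falsified) has the same level $\leq d$. Because the removal loop of \Cref{alg:backtracking} unassigns exactly those literals whose level exceeds $d$, each such $\neg \ell'$ is kept, and it can never be a reimplied literal, since a reimplied literal had level strictly greater than $d$ before backtracking. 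I would then use that \Cref{alg:backtracking} appends every reimplied literal to the tail of $\q$, so that after backtracking every reimplied literal occupies a strictly larger position than any kept literal. Combining these facts gives $p_\p(\neg \ell') < p_\p(\ell)$ for all $\ell' \in \reason(\ell) \setminus \{\ell\}$, which is precisely \Cref{inv:topological-order} restricted to the reimplied literals.

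The step I expect to be the main obstacle is ruling out dependencies among the reimplied literals themselves, since this is what makes the reimplication order irrelevant (as claimed in the remark preceding \Cref{alg:backtracking}). The subtlety is that levels are mutated during the removal loop: removed literals are set to level $\infty$, so one must argue about the level of $\ell'$ at the moment the guard is evaluated. I would show that if $\ell$ is actually placed in $\Lambda$, then no literal of $C \setminus \{\ell\}$ could have been unassigned beforehand, as otherwise the guarded maximum would already equal $\infty > d$ and the guard would fail; consequently each $\neg \ell'$ has a finite pre-backtracking level $\leq d$, is kept, and therefore cannot coincide with any reimplied literal. With all predecessors guaranteed to be kept literals, their positions are fixed ahead of the appended block regardless of the order in which the clauses of $\Lambda$ are processed, which yields both the topological-order claim and its order-independence. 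A minor point to dispatch along the way is that ``the unassigned literal in $C$'' is well defined: every literal of $C$ other than $\ell$ is falsified at level $\leq d$ and hence stays assigned, so $\ell$ is the unique unassigned literal that receives reason $C$.
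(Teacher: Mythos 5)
Your proof is correct and follows essentially the same route as the paper's: the guard $\level(\lazy(\ell) \setminus \{\ell\}) \leq d$ forces every predecessor of a reimplied literal to have level at most $d$, hence to be kept in the trail, so the reimplied literals depend only on kept literals and can be appended at the end of the trail in any order while preserving \Cref{inv:topological-order}. Your additional care about the level-mutation subtlety (removed literals being reset to level $\infty$ before later guards are evaluated) and about the well-definedness of the unique unassigned literal is a refinement that the paper's shorter proof glosses over, but it does not change the underlying argument.
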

\begin{proof}
    The reimplied literals cannot depend on each other. Indeed, if they are reimplied, their implication level before backtracking was higher than $d$. Therefore, if a literal $\ell$ depends on a literal $\ell'$ in the implication graph, then $\level(\ell) \geq \level(\ell')$. If the missed lower implication $\lazy(\ell)$ has a level lower than $d$, then all literals in $\lazy(\ell) \setminus\{\ell\}$ are lower than $d$, and therefore were not backtracked. Therefore, since all literals are independent, they can be reimplied in an arbitrary order at the end of the trail, and still respect the topological order.
\end{proof}

\section{Empirical Analysis}
\label{sec:empirical-results}
In this section, we discuss the implementation of \Cref{alg:cdcl} in our new SAT solver \napsat.
We also integrated it in \cadical and \glucose, and present our empirical results using \napsat, \cadical, and \glucose.

\subsection{\napsat for Lazy Reimplication in CDCL}
We implemented our LSCB method from \Cref{alg:cdcl} in the new SAT solver \napsat. Our \napsat tool  is a CDCL solver using the watcher list scheme~\cite{DBLP:conf/dac/MoskewiczMZZM01} with blocker literals~\cite{DBLP:journals/jsat/ChuHS09}. \napsat supports the backtracking variants of NCB, WCB, RSCB, and LSCB at runtime. In chronological backtracking, the backtracking scheme is purely chronological, that is, \napsat always backtracks to one level before the conflict (unlike \cadical).
\napsat uses the VSIDS decision heuristic~\cite{DBLP:conf/dac/MoskewiczMZZM01} with the agility restart strategy~\cite{DBLP:conf/sat/Biere08} and root-level clause elimination~\cite{BiereJarvisaloKiesl-SAT-Handbook-2021}.
\napsat is available at
\url{https://github.com/RobCoutel/NapSAT}
and consists in a total of \(\sim\)\SI{5,800}{loc}, among which the core of the solver represents \(\sim\)\SI{1,500}{\loc}.

\paragraph{Blocker literals in \napsat.}
Blocker literals are useful to reduce the number of pointer dereferencing of the literal pointer~\cite{DBLP:journals/jsat/ChuHS09}.
If the blocker $b$ is assigned at a level higher than the literal $\ell$ being falsified, then it might get backtracked before $\ell$ and a conflict might be missed. \Cref{inv:lazy-highest-watched-literals} can therefore be weakened, while still ensuring that no unit implication is missed.
\begin{invariant}[Lazy backtrack compatible watched literals with blocker literals]
  For each clause $C \in F$ watched by the two distinct literals $c_1, c_2$ and with blocker $b$, we have
    \begin{align*}
        \neg c_1 \in \trail \Rightarrow \Big(c_2 \in \p \land \big( \level(c_2) \leq \level(c_1) \lor \level(\lazy(c_2) \setminus \{c_2\}) \leq \level(c_1)\big)\Big) \\
        \lor \Big(b \in \p \land \level(b) \leq \level(c_1)\Big)
    \end{align*}
\end{invariant}
\noindent The eager update of blocking literals is in essence similar to strategies that aggressively update watched literals during BCP~\cite{watchsat}.

\paragraph{Experiments.}
\Cref{fig:comparison-backtrack-method} shows the average total number of propagations of \napsat on the 3-SAT uniform random problems from SATLIB~\cite{hoos2000satlib}. Our LSCB method from \Cref{alg:cdcl}, indicated via \texttt{-lscb}, performs better than the other backtracking versions of \napsat, both for satisfiable and unsatisfiable instances.
\Cref{fig:scatter} shows more details. In particular, it shows the total number of propagations of each unsatisfiable problem with 250 variables. It shows that LSCB consistently has fewer propagations than NCB, WCB, and RSCB.

We acknowledge that the number of propagations alone is not always representative of the real performance of a SAT solver, since propagation in LSCB is slightly more expensive than in NCB or WCB.
However, the number of propagations in \napsat indicates the impact of missed lower implications. For example, \Cref{fig:comparison-backtrack-method} shows that restoring the trail with RSCB might not be worth finding the missed lower implications in the random 3-SAT benchmarks; yet, reimplying literals lazily significantly reduces the total number of propagations.

\begin{figure}[t]
    \centering
    \subfloat[Satisfiable instances]{
        \includegraphics[width=.49\textwidth]{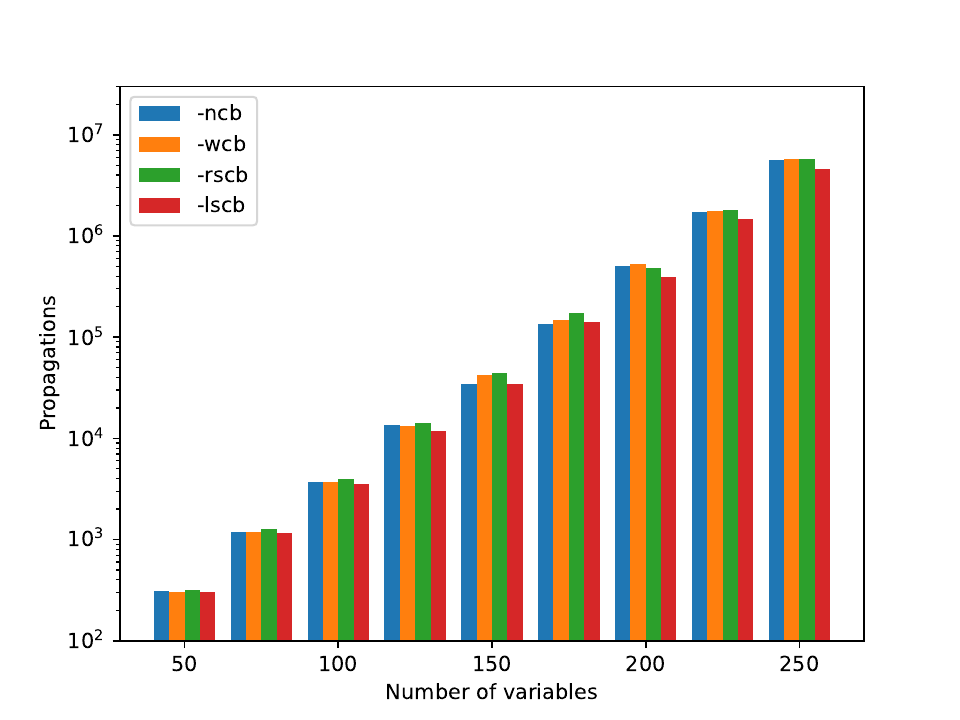}
    }
    \subfloat[Unsatisfiable instances]{
        \includegraphics[width=.49\textwidth]{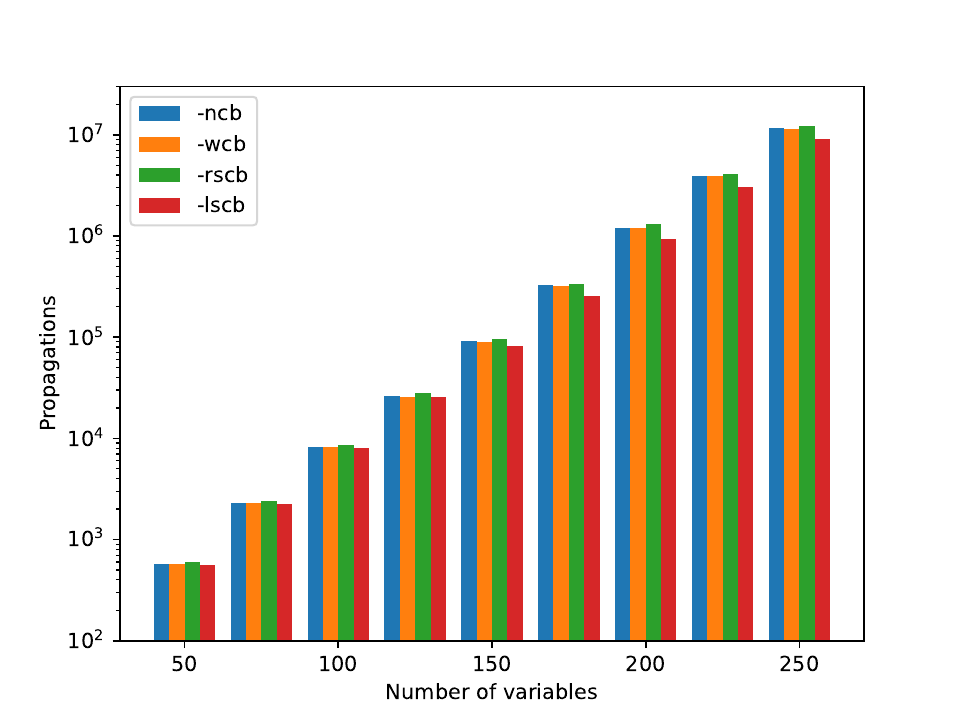}
    }
    \caption{Average total number of propagations performed by \napsat on the SATLIB 3-SAT random problem, clustered by the number of variables, and backtracking technique employed.}
    \label{fig:comparison-backtrack-method}
\end{figure}

\begin{figure}[t]
    \centering
    \subfloat[LSCB vs. NCB]{
        \includegraphics[width=.32\textwidth]{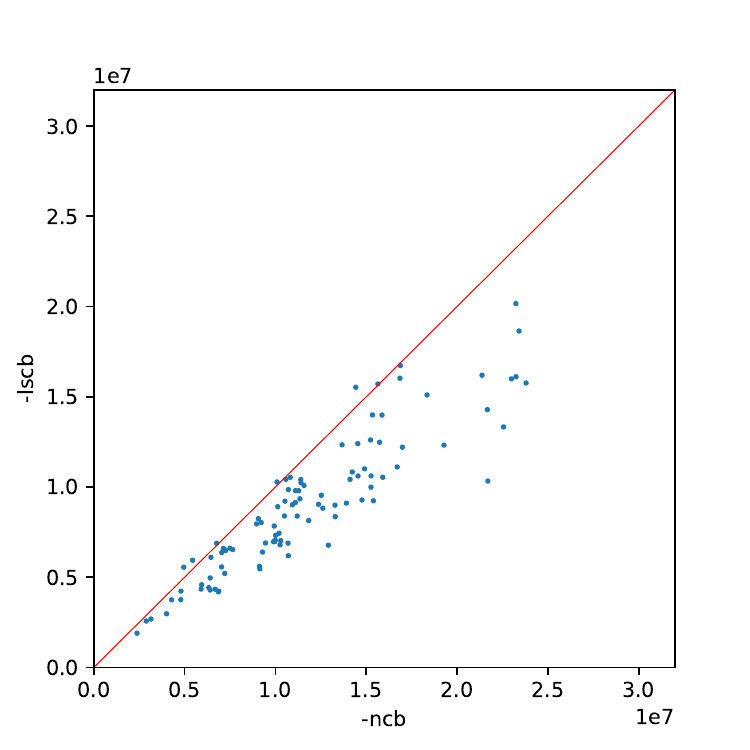}
    }
    \subfloat[LSCB vs. WCB]{
        \includegraphics[width=.32\textwidth]{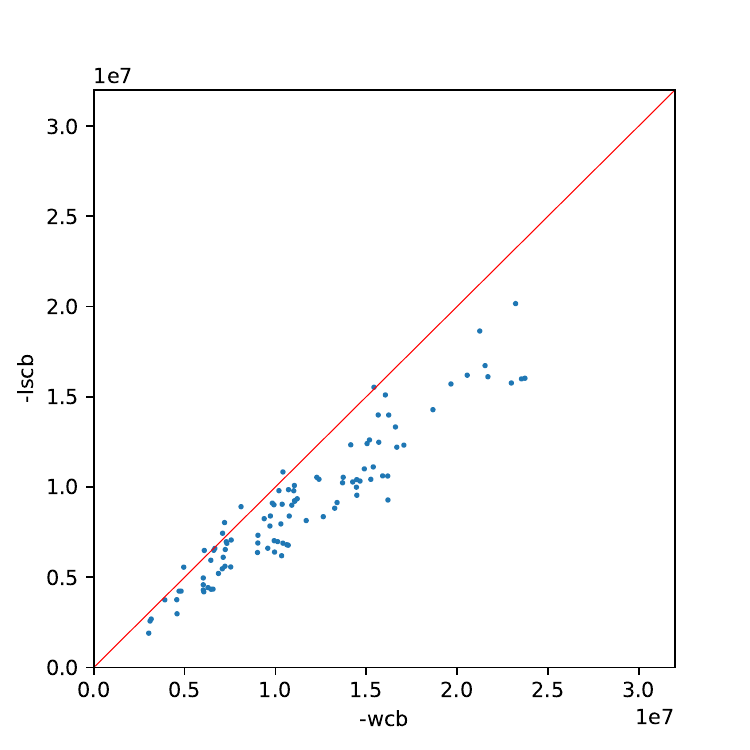}
    }
    \subfloat[LSCB vs. RSCB]{
        \includegraphics[width=.32\textwidth]{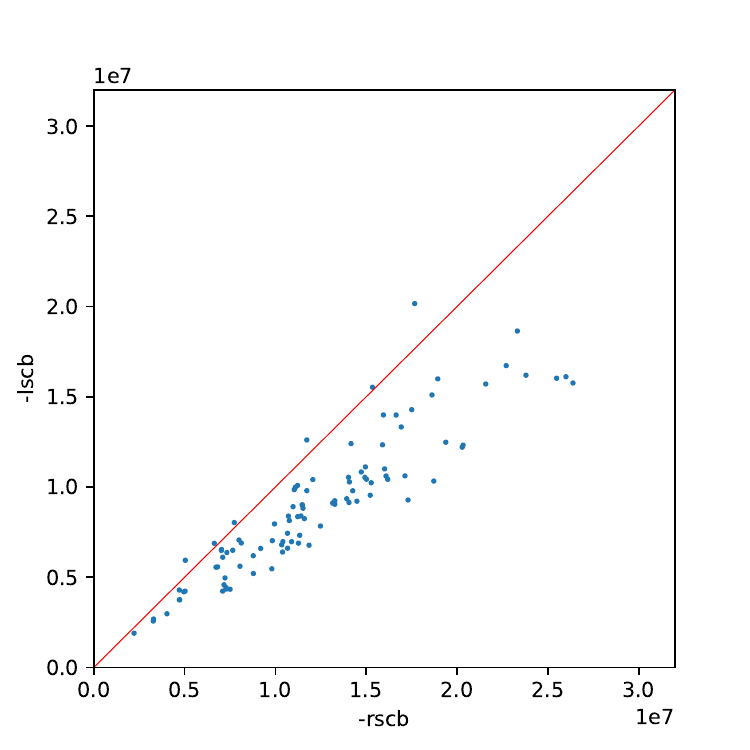}
    }
    \caption{Total number of propagations of \napsat for each unsatisfiable Uniform Random 3-SAT problem from SATLIB. The red line is the equality line. Marks under the equality line favour our new approach.}
    \label{fig:scatter}
\end{figure}

\subsection{Integration of LSCB in \cadical and \glucose}
\paragraph{LSCB in \cadical.} We implemented our LSCB approach from \Cref{alg:cdcl} in \cadical~\cite{cadical}, the baseline solver of the hack track of the SAT Competition. Thanks to the built-in model-based tester \textsc{Mobical}, the most effort came with ensuring that we have implemented correctly \Cref{inv:strong-watched-literals}: \cadical does not require watching the literals of two highest levels when the clause is propagating. This, however, requires iterating over the clause to find the propagation level, which we do not need.

We remark that we did not change the default backjumping policy of \cadical: when backjumping from more than \num{100} levels occurs (following the value implemented in \cadical), we resort to backtracking (going one level back). Otherwise, an algorithm similar to trail reuse for restarts is used to decide how many levels should be kept. Unlike the version implemented in \napsat, in \cadical we store the missed level instead of checking the level of the MLI each time we need the level.

We tested various configurations, as summarized in \Cref{tab:sc2023-cadical}, on the \textsf{bwForCluster} Helix with AMD Milan EPYC 7513 CPUs,
using a memory limit of \SI{16}{\giga\byte} RAM on the problems from the SAT Competition. Overall, we can see there is little difference between the considered configurations. In particular, the performance difference between WCB and NCB is limited, making it unclear if chronological backtracking is important. However, similar to the original \cadical implementation~\cite{DBLP:conf/sat/MohleB19}, on the benchmarks from the SAT Competition 2018, there is an improvement from WCB over NCB.
Our intuition is that chronological backtracking is especially useful when the decision heuristic is picking the wrong literals finding conflicts late instead of early (like finding a new unit at level 500 instead of level 1). The decision heuristics seem to perform worse on the 2018 benchmarks, while this did not seem to have happened since.

While the results in \napsat seem to indicate a large decrease in the number of propagations, three factors mitigate this effect in \cadical, as follows.
(i) Propagating a literal $\ell$ a second time as in RSCB is cheaper than propagating it for the first time. Most clauses remaining in the watcher list of $\ell$ will already be satisfied and are faster to check.
(ii) RSCB allows to use of blocking literals more loosely. There is no need to compare the level of the blocking literal and the propagated one, making them more potent.
(iii) Searching for a replacement literal is slightly more expensive in LSCB since we need to record the highest literal in the clause.

In a context where propagations are more expensive, such as SMT or user-propagators \cite{DBLP:conf/smt/BjornerEK22}, these considerations might weigh less on the overall performance of the solver. We will investigate these applications for future work.

\paragraph{LSCB in \glucose.} We also implemented our \Cref{alg:cdcl} for LSCB into the latest version of \glucose~\cite{glucose}, the SAT solver that pioneered the LBD heuristic for the usefulness of clauses (only without the minimization part). This is the only solver where we implemented the LSCB without any existing CB in the code. The entire diff (including new logging information and more assertions) is less than \num{1,000} lines. Our actual implementation of LSCB in \glucose is very close to our abstract \Cref{alg:cdcl}, because the blocker literal is always exactly the other watched literal. We use the simple heuristic to backtrack one level if jumping back more than 100, otherwise use the normal backjumping. We did not change the heuristic to block restarts~\cite{DBLP:journals/ijait/AudemardS18}, which is based on the trail length.

While running \glucose with LSCB on the SAT Competition 2023 (Fig.~\ref{fig:cdf-glucose-2023}), we observed worse performance. Interestingly, this is mostly due to one family of benchmarks, \texttt{SC23\_Timetable}, that perform much worse with strong backtracking (but are solved eventually). On the 2018 benchmarks again (Fig.~\ref{fig:cdf-glucose-2018}), we observed a slight performance improvement when using \glucose with LSCB and it seems to be better
to trigger chronological backtracking more often.

\begin{table}[tb]
    \centering
    \caption{\label{tab:sc2023-cadical}Number of solved instances by different variants of strong backtracking on the SC2023 competition, using a \SI{5,000}{\second} timeout}
    \renewcommand{\arraystretch}{1.2}
    \begin{tabular}{p{7cm}r@{\quad}r}
        \cadical version& solved&PAR-2 (\(\times 10^3\)) \\\midrule
        base-\cadical = RSCB & \textbf{248} & \textbf{4.09}\\
        LSCB, Analyze-\ref{strat:analyze-resolve-UIP} and minimization & 246& 4.16\\
        ESCB & 245& 4.16\\
        LSCB and Analyze-\ref{strat:analyze-resolve-UIP} & 246& 4.19\\
        NCB & 247& 4.19\\
        LSCB and Analyze-\ref{strat:reanalyze} & 242 & 4.24\\
    \end{tabular}
\end{table}

\begin{figure}[h!]
    \centering
    \begin{subfigure}{1\textwidth}
    \includegraphics[width=\textwidth]{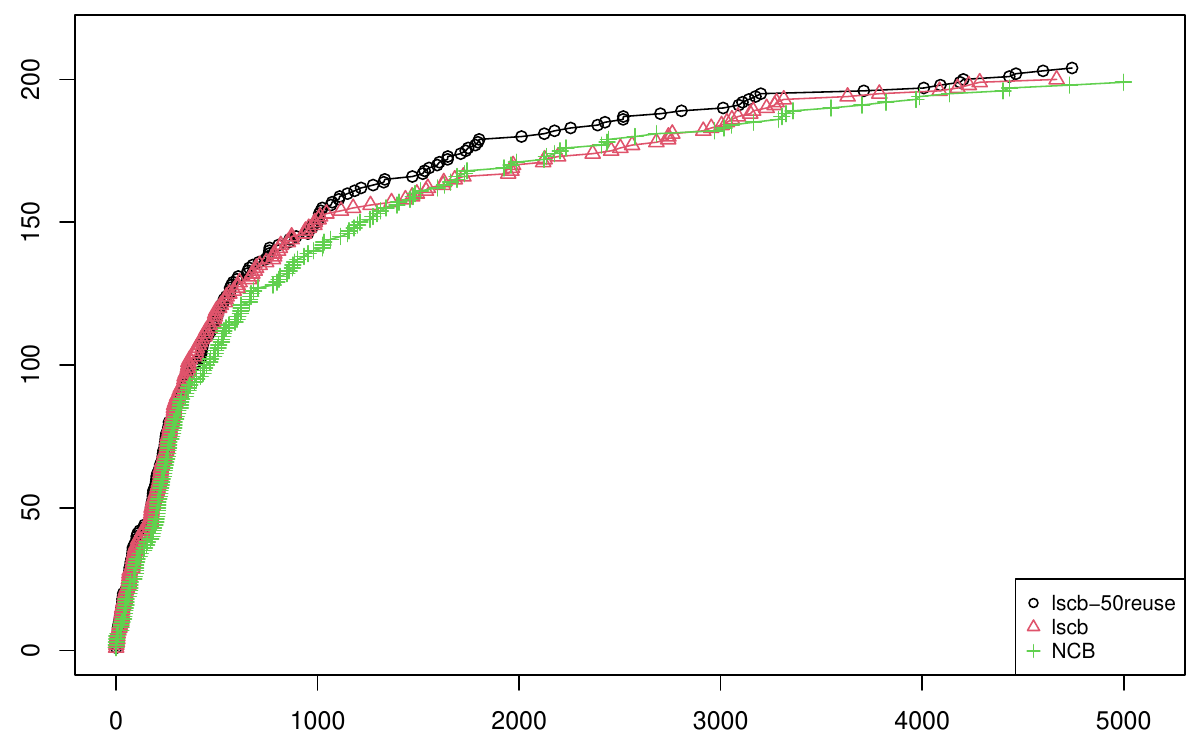}
    \caption{\label{fig:cdf-glucose-2018}\glucose variants in the SAT Competition 2018}
    \end{subfigure}

    \begin{subfigure}{1\textwidth}
    \includegraphics[width=\textwidth]{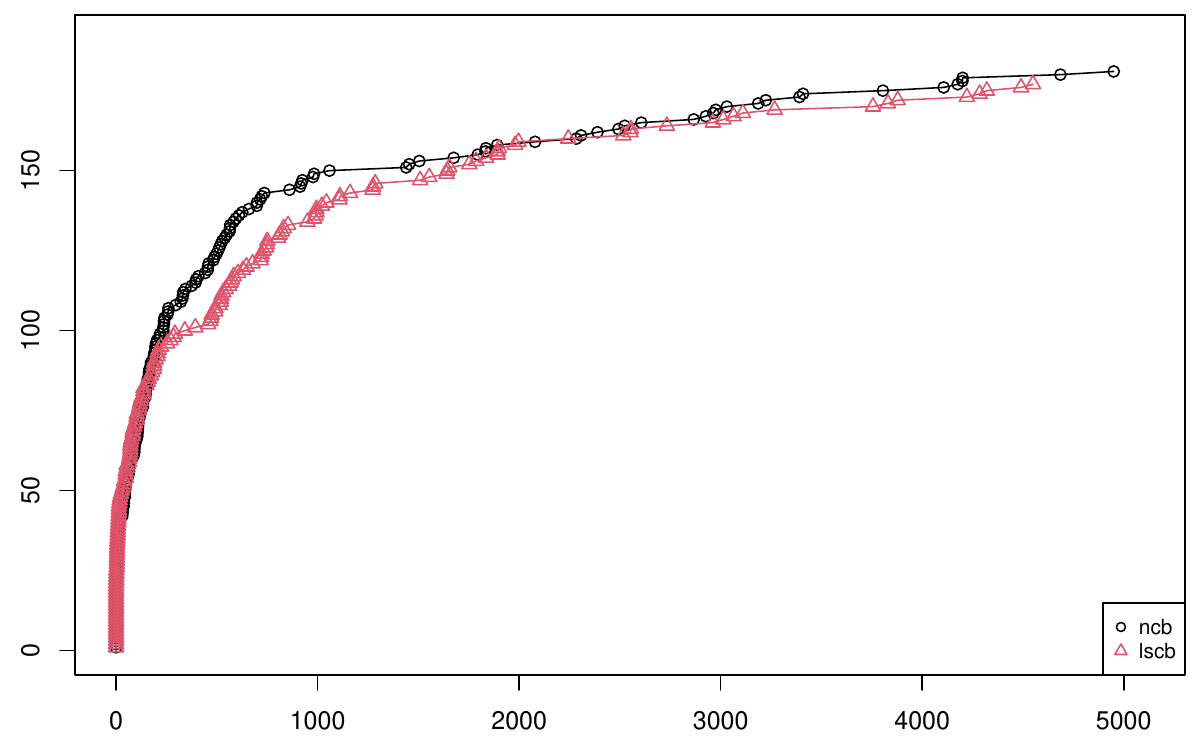}
    \caption{\label{fig:cdf-glucose-2023}\glucose variants in the SAT Competition 2023}
    \end{subfigure}

    \caption{\label{fig:cdf-glucose}CDF of the different \glucose (without strategy adapting) versions. The constant indicates when chronological backtracking is triggered instead of backjumping: We apply chronological backtracking when NCB would require jumping back more than 100 levels by default, like in \cadical. In the SAT Competition 2018, the version that triggers chronological backtracking for more than 50 levels performs best.}
\end{figure}
\newpage
\section{Conclusion}
We introduce a lazy reimplication procedure to be used in CDCL with (variants of) chronological backtracking.
We particularly focus on the definitions of weak chronological backtracking (WCB), restoring strong chronological backtracking (RSCB), eager strong chronological backtracking (ESCB), and lazy strong chronological backtracking (LSCB).
Our invariant properties on these backtracking variants exploit watched literals. We prove that our approach of
lazy reimplication in strong chronological backtracking (LSCB) yields a sound and complete SAT solving method. Our implementation in \napsat, and its integration with \cadical and \glucose, gives practical evidence that
LSCB is significantly easier to implement than ESCB, while also propagating fewer literals than RSCB, and providing better guarantees than WCB.

In the future, we intend to extend our LSCB method to reason over virtual literal levels, that is, levels of missed lower implications if such a clause is detected. We believe such an extension would allow to converge closer to the guarantees of ESCB, while mitigating both algorithmic complexities and reimplication costs. Further, we will explore the integration of chronological backtracking variants in the context of SMT, as a robust approach to handling arbitrary incremental clauses and expensive theory propagations.

\bibliography{bibliography}

\end{document}